\documentclass{article}

 \usepackage[preprint]{neurips_2026}

\usepackage[utf8]{inputenc} 
\usepackage[T1]{fontenc}    
\usepackage{hyperref}       
\usepackage{url}            
\usepackage{booktabs}       
\usepackage{amsfonts}       
\usepackage{nicefrac}       
\usepackage{microtype}      
\usepackage[table]{xcolor}  
\usepackage{multirow}
\usepackage{colortbl}
\usepackage{amsmath}
\usepackage{amssymb}
\usepackage{amsthm}
\usepackage{enumitem}
\usepackage{graphicx}
\usepackage{wrapfig}
\graphicspath{{./images/}}
\graphicspath{{../plots/}{./figs/}}
\definecolor{lightcrimson}{rgb}{0.93, 0.16, 0.51}

\newtheorem{theorem}{Theorem}

\newtheorem{lemma}{Lemma}
\newtheorem{corollary}{Corollary}

\DeclareMathOperator{\Exp}{Exp}

\title{Mult-DPO: Multinomial Direct Preference Optimization for Recommender Systems}

\author{%
  Yaochen Zhu$^{1}$ \quad
  Harald Steck$^{2}$ \quad
  James McInerney$^{2}$ \quad
  Aditya Sinha$^{2}$ \\ \bf
  Yinhan He$^{1}$ \quad
  Nathan Kallus$^{2,3}$ \quad
  Jundong Li$^{1}$ \\[2pt] \bf
  $^{1}$University of Virginia \quad
  $^{2}$Netflix \quad
  $^{3}$Cornell University \\[2pt] \bf
  \texttt{\{uqp4qh, nee7ne, jundong\}@virginia.edu} \\ \bf
  \texttt{\{hsteck, jmcinerney, adityasinha, nkallus\}@netflix.com}
}

\begin{document}

\maketitle

\begin{abstract}

Direct preference optimization (DPO) is a simple and effective alignment strategy for large language models (LLMs) based on pairwise preferences. In recommender systems, however, user feedback is rarely pairwise. For a given context, e.g., a user, a session, or a conversation, we typically observe \textbf{set-wise preferences} with \emph{multiple} positive items, where every positive item should outrank every unobserved or explicitly negative item, with no prescribed order among the positives or the negatives themselves. A natural generalization is to use the Plackett–Luce (PL) reward model, which extends the Bradley–Terry reward model underlying vanilla DPO from pairwise preferences to full rankings of candidates. However, we show that adapting the PL model to set-wise preferences requires marginalizing over all positive orderings, where the resulting expression is combinatorial in complexity. To address this fundamental challenge, we propose \textbf{Mult-DPO}, a novel DPO objective with a tractable multinomial surrogate likelihood over set-wise preference events for the user-preference alignment of LLM-based recommender systems. The multinomial construction is not itself a ranking distribution, but it is defined on the same reward-induced weight space and admits a closed-form DPO-style objective, enabling direct alignment of LLMs with multiple candidates through a classification-style objective. In addition, we prove that the multinomial DPO loss is a tractable \emph{upper bound} on the marginalized PL DPO loss when optimizing against the set-wise preference data. We further characterize the tightness of this bound in terms of the relative total weight of positives versus negatives, which provides insights into tightening the bound with richer or harder negatives. Finally, we extend Mult-DPO to the alignment of LLMs with multiple preference levels. Code is available at {\color{lightcrimson}\url{https://github.com/yaochenzhu/Mult_DPO}}.

\end{abstract}

\section{Introduction}

Large language models (LLMs) have achieved remarkable success through large-scale pretraining. To further align these models with human preferences, reinforcement learning from human feedback (RLHF) \citep{christiano2017deep, ouyang2022training} emerged as a popular framework: It first trains a reward model from pairwise human preferences, and then optimizes the LLM policy via reinforcement learning (RL). Recently, direct preference optimization (DPO) \citep{rafailov2023direct} offers a simpler alternative by leveraging a closed-form connection between the optimal policy and the reward function in RLHF, eliminating the RL stage and reducing the alignment to a classification-style objective on preference data. Owing to its conceptual simplicity and strong empirical performance, DPO has been widely adopted in LLM alignment tasks such as summarization \citep{liu2023statistical}, code generation \citep{weyssow2024codeultrafeedback}, and mathematical reasoning \citep{pang2024iterative}.

As LLMs increasingly serve as the backbone of recommender systems (RSs) \citep{bao2023tallrec, liao2024llara, he2023zeroshot}, it is natural to ask whether DPO can play the same role in aligning LLM-based RSs with user preferences. The answer, however, is complicated by a fundamental mismatch between the preference data structure assumed by vanilla DPO and that observed in recommendation tasks. Specifically, DPO assumes \emph{pairwise} preferences between two candidates per context, which suits the setting of long responses for QA-style tasks where candidates are difficult to generate \citep{bai2022training, stiennon2020summarization}. In contrast, user feedback in RSs is rarely pairwise. For a given context in RSs, e.g., an interaction history or a conversation between user and RS, we typically observe \emph{multiple} positive items (e.g., items liked, clicked by the user) and negative items (e.g., items unobserved or explicitly rejected) \citep{rendle2009bpr, liang2018vae}. From the perspective of preference alignment, every positive should be preferred by the LLM over every negative, yet no order should be enforced among the positives or among the negatives themselves. Applying vanilla DPO to set-wise preferences therefore requires enumerating all positive-negative combinations, which is computationally expensive and discards the joint rank structure of the preference data.

Recently, several works have adapted DPO to preference alignment of LLMs with multiple candidates, which can be categorized into two technical routes. One route is to generalize the Bradley-Terry (BT) \citep{bradley1952rank} reward model underlying vanilla DPO from pairwise preferences to its listwise counterpart, the Plackett-Luce (PL) model \citep{plackett1975analysis, luce1959individual}, including methods such as PRO \citep{song2024preference}, KPO \citep{zhang2025kpo}, etc. However, when applied to set-wise preference, the PL likelihood must be marginalized over all positive orderings consistent with the observation (where the orderings of the negatives can be directly summed-out), which is combinatorially expensive. Another route sidesteps this intractability by restricting the supervision to one single positive per context paired with multiple negatives. For example, DMPO \citep{bai2024dmpo} contrasts the positive against the arithmetic mean of the negative log-ratios within the BT model, whereas S-DPO \citep{chen2024softmax} shows that, under the single-positive restriction, the marginalized PL likelihood collapses to a closed-form softmax DPO loss that contrasts the positive jointly against all sampled negatives. Nevertheless, neither route faithfully retains the joint set-wise structure of the feedback, leaving alignment with multi-positive preferences a fundamentally unresolved challenge.

To address the challenge, we propose \textbf{Mult-DPO}, a novel DPO-style objective for aligning LLM-based RSs with set-wise preferences. We first introduce a multinomial surrogate event model as a tractable lower-bound surrogate for the marginalized PL likelihood over set-wise preferences between the positive and negative candidate sets. Although this surrogate is not a ranking model, it is defined on the same reward-induced weights as BT and PL and yields a closed-form DPO-style objective through the standard RLHF policy-ratio reparameterization. Theoretically, we prove that minimizing the multinomial DPO loss optimizes a tractable upper bound on the otherwise intractable marginalized PL DPO loss. Furthermore, we characterize the tightness of this bound in closed form in terms of the relative total weights (i.e., exponentiated rewards) of positives versus negatives, and show that the bound becomes tighter when richer or harder negatives are selected. Finally, we extend Mult-DPO to alignment with multiple preference levels when fine-grained user preference data is available (e.g., explicit ratings)  via a \textit{sequential multinomial (SMN)} surrogate, and show that all theoretical properties of Mult-DPO from the two-set case carry over to this generalization. On both general recommendation and conversational recommendation benchmarks, we demonstrate that Mult-DPO and its multi-level extension consistently outperform various DPO-based baselines.

\vspace{-2mm}

\section{Preliminaries}

\vspace{-1mm}

\subsection{Problem Formulation}
\label{sec:problem}

In this paper, we aim to tackle the fundamental challenge of aligning LLM-based RSs with user preferences over multiple candidates. Specifically, let $x$ denote the recommendation context, which may include a user profile, an interaction history, or a conversation, and let $\mathcal{C}$ denote the item catalog from which recommendations should be generated, where each item $e$ is rendered as a token sequence $y(e)=(y_1,\ldots,y_{m_e})$. For each context $x$ in the preference dataset $\mathcal{D}$, we observe user preference over candidate items in the form of $(x, \mathcal{E}^{p}, \mathcal{E}^{d})$, where $\mathcal{E}^{p} = \{e_1, \ldots, e_k\}$ and $\mathcal{E}^{d} = \{e_{k+1}, \ldots, e_K\}$ denote disjoint positive and negative item sets associated with $x$, respectively. Here, $\mathcal{E}^{p}$ contains items receiving positive feedback (e.g., liked or clicked), whereas $\mathcal{E}^{d}$ contains uninteracted or rejected items. We use $\mathcal{E} = \mathcal{E}^{p} \cup \mathcal{E}^{d} \subset \mathcal{C}$ to denote the full candidate set, with $k=|\mathcal{E}^{p}|$ and $K-k =|\mathcal{E}^{d}|$ for the set-wise contrastive setting studied below. We note that the user preference tuple implies a \textbf{set-wise ranking constraint} that every positive item in $\mathcal{E}^{p}$ should outrank every negative item in $\mathcal{E}^{d}$:
\begin{equation}
\label{eq:set_rank}
\Omega_x := \{ e \succ e' \mid e \in \mathcal{E}^{p}, \ e' \in \mathcal{E}^{d} \},
\end{equation}
but no order is implied among the positives or the negatives themselves. Let $\pi_{\theta}(e \mid x)$ denote an LLM-based RS policy that assigns a generation probability to each candidate item $e$ given the context $x$. This paper aims to align $\pi_{\theta}$ with the set-wise preference structure in $\Omega_x$ that fully leverages the joint multi-positive and negative structure while remaining computationally efficient, such that the items in catalog $\mathcal{C}$ can be effectively ranked for recommendation at inference time.



\subsection{Theoretical Analysis: RLHF, DPO and Marginalized Plackett-Luce Extension}
\label{sec:RLHF}

In this section, we first review reinforcement learning from human feedback and direct preference optimization, and discuss the naive extension of DPO to set-wise preferences with a marginalized Plackett-Luce likelihood, which forms the foundation of Mult-DPO introduced in Section~\ref{sec:method}.

\textbf{2.2.1 Reinforcement Learning from Human Feedback (RLHF).}
RLHF aligns an LLM-based policy $\pi_{\theta}(e \mid x)$ with user preferences by optimizing it against a reward model derived from human preferences. Specifically, for each context-item pair $(x, e)$, we assume a latent reward $r(x, e) \in \mathbb{R}$ indicating the utility of recommending item $e$ under context $x$. We define the associated weight as $w(e \mid x) := \exp(r(x,e)) > 0$ for the convenience of linking $r(x, e)$ to probabilistic distributions. The \textbf{Bradley-Terry Preference Model} (BT) ~\citep{bradley1952rank} assumes that, given two candidate items $e_p$ and $e_d$ for context $x$ with $e_p$ preferred over $e_d$, the probability of observing such pair-wise preference can be formulated with the difference between their latent rewards as follows:
\begin{equation}
P(e_p \succ e_d \mid x)
=
\sigma\!\big(r(x,e_p) - r(x,e_d)\big)
=
\frac{w(e_p \mid x)}{w(e_p \mid x) + w(e_d \mid x)},
\label{eq:bt_model}
\end{equation}
where $\sigma(\cdot)$ is the sigmoid function. The reward model is then estimated by minimizing the negative log-likelihood of the observed comparisons induced by Eq. (\ref{eq:bt_model}). Given a fixed reference policy $\pi_{\mathrm{ref}}(e \mid x)$ and learned reward model $r(x,e)$, the RLHF objective can be formulated as follows:
\begin{equation}
\max_{\pi(\cdot \mid x)}
\ \mathbb{E}_{e \sim \pi(\cdot \mid x)}[r(x,e)]
-
\beta \, \mathrm{KL}\!\left(
\pi(\cdot \mid x)\,\|\,\pi_{\mathrm{ref}}(\cdot \mid x)
\right),
\label{eq:multdpo_rlhf}
\end{equation}
where $\beta > 0$ is an important hyperparameter that controls the regularization strength.

\vspace{1mm}
\noindent \textbf{2.2.2 Direct Preference Optimization (DPO).}
DPO~\citep{rafailov2023direct} avoids explicit reward modeling by observing that the RLHF objective in Eq.~\eqref{eq:multdpo_rlhf} admits a closed-form solution as follows:
\begin{equation}
r(x,e)
=
\beta \log \frac{\pi^{*}(e \mid x)}{\pi_{\mathrm{ref}}(e \mid x)}
+
\beta \log Z(x),
\label{eq:multdpo_reward_policy_ratio}
\end{equation}
where $Z(x)$ is an intractable partition function. Observing that the BT likelihood in Eq.~\eqref{eq:bt_model} depends only on reward differences, $Z(x)$ cancels out after substituting Eq.~\eqref{eq:multdpo_reward_policy_ratio}. Replacing $\pi^*$ with a trainable policy $\pi_\theta$ yields the DPO objective as follows:
\begin{equation}
\mathcal{L}_{\mathrm{DPO}}(x,e_p,e_d)
=
-\log
\sigma\!\left(
\beta \log \frac{\pi_\theta(e_p \mid x)}{\pi_{\mathrm{ref}}(e_p \mid x)}
-
\beta \log \frac{\pi_\theta(e_d \mid x)}{\pi_{\mathrm{ref}}(e_d \mid x)}
\right),
\label{eq:multdpo_pairwise_dpo}
\end{equation}
which reduces RLHF alignment to a classification-like objective on the preference data. 

\vspace{1mm}
\textbf{2.2.3 Marginalized Plackett-Luce DPO Objective for Set-wise Preference.} Vanilla DPO can only align LLMs with two candidates due to the pairwise constraint of the BT preference model. A natural generalization of BT to multiple candidates is the Plackett-Luce (PL) model~\citep{plackett1975analysis,luce1959individual}. As in BT, PL associates each item $e \in \mathcal{E}$ with a latent reward $r(x, e)$ (and weight $w(e \mid x)$). For a permutation $\tau$ of the candidate items in $\mathcal{E}$, the PL likelihood can be formulated as follows:
\begin{equation}
p_{\mathrm{PL}}(\tau \mid x,\mathcal{E};w)
=
\prod_{t=1}^{|\mathcal{E}|}
\frac{w(e_{\tau(t)} \mid x)}
{\sum_{j=t}^{|\mathcal{E}|} w(e_{\tau(j)} \mid x)},
\label{eq:pl_full_rank}
\end{equation}
which denotes a sequential selection process where the item at rank $t$ is sampled with probability proportional to its weight among the remaining candidates. To use PL to model the set-wise preference event $\Omega_x$, however, since the positive and negative ordering are unknown, we must marginalize over all consistent ranks of $\mathcal{E}$. Let $S_k$ denote the permutations of the positive indices $\{1, \ldots, k\}$, and define the cumulative weights of the positive set, negative set, and full candidate set under context $x$:
\begin{equation}
A := \sum_{e \in \mathcal{E}^{p}} w(e \mid x), \ \
B := \sum_{e \in \mathcal{E}^{d}} w(e \mid x), \ \
W := A + B = \sum_{e \in \mathcal{E}} w(e \mid x),
\label{eq:multdpo_mass_def}
\end{equation}
With these definitions, we show that by marginalizing the PL likelihood in Eq. (\ref{eq:pl_full_rank}) (see Appendix~\ref{appendix:mult-dpo}), the likelihood of $\Omega_x$ can be formulated as follows (which we dub the \textbf{marginalized PL event model}):
\begin{equation}
p_{\mathrm{PL}}(\Omega_x \mid x, \mathcal{E}; w)
=
\sum_{\rho \in S_k}
\prod_{t=1}^{k}
\frac{w(e_{\rho(t)} \mid x)}
{B + \sum_{j=t}^{k} w(e_{\rho(j)} \mid x)},
\label{eq:multdpo_pl_likelihood}
\end{equation}
where the marginalization over the negative permutations gracefully vanishes into the cumulative weights $B$. An equivalent inclusion-exclusion form derived from the exponential-race interpretation of PL reduces the number of terms from $k!$ to $2^k$ (see \ref{app:pl-ie}), but it still remains exponential w.r.t. $k$. Therefore, directly optimizing Eq.~\eqref{eq:multdpo_pl_likelihood} as a DPO-style objective is infeasible even if $k$ is moderate.

\section{Mult-DPO: Multinomial Direct Preference Optimization}
\label{sec:method}

\subsection{Multinomial Surrogate Event Model}
\label{sec:mult}

To obtain an effective and tractable surrogate for the exact marginalized PL event model for aligning LLMs with set-wise preferences, we propose an alternative event model defined on the same weight space, i.e., $w(e \mid x), \ \forall e \in \mathcal{E}$, which we name the \textbf{multinomial (MN) surrogate}. The surrogate first normalizes the weights into a categorical distribution over candidates:
\begin{equation}
p(e \mid x) := \frac{w(e \mid x)}{W}, \ \ e \in \mathcal{E}.
\label{eq:multdpo_categorical}
\end{equation}
The surrogate likelihood then defines the observed set-wise event $\Omega_x$ as the probability, under $k$ independent item draws from $p(\cdot \mid x)$, that \textit{each positive item in $\mathcal{E}^{p}$ appears exactly once while no negative item is sampled}. Since the $k$ positives can appear in any draw order, there are $k!$ equivalent sequences that correspond to this event. Therefore, the MN surrogate likelihood of the set-wise preference event can be formulated as follows:
\begin{equation}
p_{\mathrm{MN}}(\Omega_x \mid x, \mathcal{E}; w)
=
k! \prod_{e \in \mathcal{E}^{p}} \frac{w(e \mid x)}{W},
\label{eq:multdpo_mn_likelihood}
\end{equation}
which can be computed in $\mathcal{O}(k)$ complexity.
Unlike PL, the MN construction is not a distribution over permutations: it is an IID event likelihood and assigns probability mass to duplicate draw sequences outside the valid ranking space. We use it as a tractable surrogate for the observed positive set, and the lower-bound result justifies it as a conservative surrogate for the exact marginalized PL likelihood:
\begin{theorem}
\label{thm:multdpo_lb}
For any disjoint sets $\mathcal{E}^{p}$ and $\mathcal{E}^{d}$ with $|\mathcal{E}^{p}|\ge 1$, and any positive weights $\{w(e \mid x)\}_{e \in \mathcal{E}}$,
\begin{equation}
p_{\mathrm{PL}}(\Omega_x \mid x, \mathcal{E}; w)
\ge
p_{\mathrm{MN}}(\Omega_x \mid x, \mathcal{E}; w).
\label{eq:multdpo_lb}
\end{equation}
\end{theorem}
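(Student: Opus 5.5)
The plan is a term-by-term comparison. Both sides are sums over the $k!$ orderings $\rho\in S_k$ of the positives. The right-hand side of Eq.~\eqref{eq:multdpo_mn_likelihood} equals $k!\prod_{e\in\mathcal{E}^p} w(e\mid x)/W^k$. We will write it as
\[
\sum_{\rho\in S_k}\prod_{t=1}^{k}\frac{w(e_{\rho(t)}\mid x)}{W},
\]
since every summand has the same value. Compare this with the marginalized PL expression in Eq.~\eqref{eq:multdpo_pl_likelihood}. For each fixed $\rho$ and each $t$, the numerators coincide, and the PL denominator is
\[
B+\sum_{j=t}^{k} w(e_{\rho(j)}\mid x)=W-\sum_{j=1}^{t-1} w(e_{\rho(j)}\mid x)\le W .
\]
The inequality holds because all weights are positive, and it is an equality exactly when $t=1$.

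Hence each PL factor $w(e_{\rho(t)}\mid x)/(B+\sum_{j\ge t} w(e_{\rho(j)}\mid x))$ is at least the corresponding MN factor $w(e_{\rho(t)}\mid x)/W$. Both factors are positive, so multiplying the $k$ factor-wise inequalities preserves the direction, and this gives the product inequality for every $\rho$. Summing over $\rho\in S_k$ yields Eq.~\eqref{eq:multdpo_lb}.

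As a remark, equality holds iff $k=1$: in that case the only factor is $t=1$, and both sides reduce to $w(e_1\mid x)/W$, which recovers the softmax (S-DPO) case. For $k\ge2$ the inequality is strict, because positivity of the weights makes the $t\ge2$ denominators strictly smaller than $W$.

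There is no real obstacle here. The only point needing care is to use the PL form with the negatives already summed out into $B$, i.e.\ Eq.~\eqref{eq:multdpo_pl_likelihood} as derived in the appendix. Once that form is taken as given, the argument is the elementary monotonicity above, and it requires neither $B>0$ nor any structure of $\mathcal{E}^d$ beyond disjointness. If $\mathcal{E}^d$ is empty, then $B=0$ and the argument goes through unchanged.
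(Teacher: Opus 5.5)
Your proposal is correct and follows essentially the same route as the paper's proof: both bound each PL denominator $B+\sum_{j=t}^{k} w(e_{\rho(j)}\mid x)$ above by $W$, multiply the resulting factor-wise inequalities for each fixed $\rho\in S_k$, and then sum over $\rho$ to recover $k!\prod_{e\in\mathcal{E}^p} w(e\mid x)/W$. Your additional remarks (equality exactly when $k=1$, and the argument still working when $B=0$) are accurate refinements that the paper does not state explicitly.
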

In addition, we prove that the ratio between the PL likelihood and the MN surrogate likelihood is bounded, which will later imply a loss-gap bound between marginalized PL-DPO and Mult-DPO.
\begin{theorem}
\label{thm:multdpo_tightness}
For any disjoint sets $\mathcal{E}^{p}$ and $\mathcal{E}^{d}$ with $|\mathcal{E}^{p}|\ge 1$ and $|\mathcal{E}^{d}|\ge 1$, and any positive weights $\{w(e \mid x)\}_{e \in \mathcal{E}}$, the ratio between the marginalized PL likelihood and the multinomial surrogate likelihood is bounded by the following relations
\begin{equation}
1
\;\le\;
\frac{p_{\mathrm{PL}}(\Omega_x \mid x, \mathcal{E}; w)}{p_{\mathrm{MN}}(\Omega_x \mid x, \mathcal{E}; w)}
\;\le\;
\left(1+\frac{A}{B}\right)^{k-1}.
\label{eq:multdpo_tightness}
\end{equation}
\end{theorem}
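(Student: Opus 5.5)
The lower bound $1 \le p_{\mathrm{PL}}/p_{\mathrm{MN}}$ is exactly Theorem~\ref{thm:multdpo_lb}, so only the upper bound needs proof. The plan is to compare the two likelihoods term by term over the permutations $\rho \in S_k$. Write $w_i := w(e_i \mid x)$. Since $\prod_{t=1}^k w_{\rho(t)} = \prod_{i=1}^k w_i$ for every $\rho$, and $p_{\mathrm{MN}}$ has exactly $k!$ identical terms, we can write
\begin{equation*}
p_{\mathrm{MN}}(\Omega_x \mid x,\mathcal{E};w) = \sum_{\rho\in S_k} \frac{\prod_{i} w_i}{W^k},
\qquad
p_{\mathrm{PL}}(\Omega_x \mid x,\mathcal{E};w) = \sum_{\rho\in S_k} \frac{\prod_{i} w_i}{\prod_{t=1}^k D_t^{\rho}},
\end{equation*}
where $D_t^{\rho} := B + \sum_{j=t}^k w_{\rho(j)}$. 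If we bound each summand of $p_{\mathrm{PL}}$ by $(1+A/B)^{k-1}$ times the matching summand of $p_{\mathrm{MN}}$, summing over $\rho$ gives the claim.

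The per-permutation ratio is $\prod_{t=1}^k W/D_t^{\rho}$. At $t=1$ we have $D_1^{\rho} = B + A = W$, so that factor equals $1$; this is where the exponent $k-1$ instead of $k$ comes from. For $t \ge 2$ we use $D_t^{\rho} \ge B$, which holds because all weights are positive. Hence $W/D_t^{\rho} \le W/B = 1 + A/B$. Multiplying the $k-1$ nontrivial factors gives $\prod_t W/D_t^{\rho} \le (1+A/B)^{k-1}$. The assumption $|\mathcal{E}^d| \ge 1$ is needed so that $B>0$ and the bound is finite.

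As a consistency check, each factor satisfies $W/D_t^{\rho} \ge 1$ because $D_t^{\rho} \le W$. This reproves the lower bound of Theorem~\ref{thm:multdpo_lb} within the same argument, so the proof can present both inequalities together.

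There is no serious obstacle. The only care needed is to make the rewriting of $p_{\mathrm{MN}}$ as a sum over $S_k$ explicit, so that the comparison is genuinely term by term, and to isolate the $t=1$ factor correctly to get the exponent $k-1$. Optionally, one can note that the bound is approached when $B$ dominates the partial positive sums for $t\ge2$; this is not required for the statement.
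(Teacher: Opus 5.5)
Your proposal is correct and matches the paper's proof step for step. The paper also isolates the trivial $t=1$ factor and writes the per-permutation ratio as $\prod_{t=2}^{k} W/(W-H_{t-1}(\rho))$ with prefix sums $H_{t-1}(\rho)$. It then bounds each factor via $H_{t-1}(\rho)\le A$, which is your $D_t^{\rho}\ge B$, and sums the uniform bound over $\rho\in S_k$.
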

The proof of Theorems~\ref{thm:multdpo_lb} and~\ref{thm:multdpo_tightness} is in \ref{sec:proof1} and~\ref{sec:proof2}. We note that the bound in Eq.~\eqref{eq:multdpo_tightness} is exact when $k=1$, in which case the MN surrogate likelihood reduces to the categorical likelihood in S-DPO \citep{chen2024softmax}, which generalizes DPO to alignment with \textbf{one} positive over multiple negatives.

\subsection{Mult-DPO Objective for LLM Alignment with Set-wise Preference}
\label{sec:mult-objective}

Building on the MN surrogate, we further show that it is compatible with the same RLHF reward-policy reparameterization that underlies DPO, yielding a classification-like alignment objective on (now set-wise) preference data. Recall that, under RLHF, the reward can be expressed in terms of the optimal policy and the reference policy as $r(x,e)=\beta \log \frac{\pi^{*}(e \mid x)}{\pi_{\mathrm{ref}}(e \mid x)}+\beta \log Z(x)$ (see Eq.~\eqref{eq:multdpo_reward_policy_ratio}), and the item weight is defined as $w(e \mid x):=\exp(r(x,e))$. Combining these two relations and replacing the optimal policy $\pi^*$ with a trainable policy $\pi_\theta$ yields the \textbf{policy-induced weights} as follows:
\begin{equation}
w_{\boldsymbol{\pi_\theta}}(e \mid x)
\propto
\left(
\frac{\pi_\theta(e \mid x)}{\pi_{\mathrm{ref}}(e \mid x)}
\right)^{\beta},
\label{eq:multdpo_dpo_weight}
\end{equation}
where the omitted proportionality constant $Z_{\pi_{\theta}}(x)^{\beta}$ is shared across candidates and cancels in both the PL and MN surrogate likelihoods. Substituting the policy-induced weights $w_{\pi_\theta}(e \mid x)$ into Eq.~\eqref{eq:multdpo_mn_likelihood} and taking the negative log-likelihood leads to the proposed \textbf{Mult-DPO} objective as follows:
\begin{align}
\mathcal{L}_{\mathrm{Mult\text{-}DPO}}(x,\mathcal{E}^{p},\mathcal{E}^{d})
&:=
-\log p_{\mathrm{MN}}(\Omega_x \mid x, \mathcal{E}; w_{\pi_\theta}) \nonumber \\
&\;=\;
-\beta \sum_{e \in \mathcal{E}^{p}}
\log \frac{\pi_\theta(e \mid x)}{\pi_{\mathrm{ref}}(e \mid x)}
+
k \log \sum_{e \in \mathcal{E}}
\left(
\frac{\pi_\theta(e \mid x)}{\pi_{\mathrm{ref}}(e \mid x)}
\right)^{\beta}
+ C.
\label{eq:multdpo_final_obj}
\end{align}
where $C$ is a constant depending only on $k$ and can be omitted during optimization. 


\subsection{Relation between Marginalized PL and Multinomial DPO Objective}
To further characterize the relation between the PL-DPO and the tractable Mult-DPO objectives, we define the policy-induced cumulative weights of the positives and negatives analogous to Eq.~\eqref{eq:multdpo_mass_def} as:
\begin{equation}
A_{\pi_\theta}
:=
\sum_{e \in \mathcal{E}^{p}} w_{\pi_\theta}(e \mid x), \ \
B_{\pi_\theta}
:=
\sum_{e \in \mathcal{E}^{d}} w_{\pi_\theta}(e \mid x),
\label{eq:multdpo_policy_masses}
\end{equation}
where both $A$ and $B$ now depend on policy $\pi_\theta$. Substituting the policy-induced weights from Eq.~\eqref{eq:multdpo_dpo_weight} into Eq.~\eqref{eq:multdpo_pl_likelihood} and taking the negative log-likelihood gives the ideal but intractable PL-DPO loss:
\begin{equation}
\mathcal{L}_{\mathrm{PL\text{-}DPO}}(x,\mathcal{E}^{p},\mathcal{E}^{d})
:=
-\log p_{\mathrm{PL}}(\Omega_x \mid x, \mathcal{E}; w_{\pi_\theta}),
\label{eq:multdpo_pl_dpo_loss}
\end{equation} 
Theorems~\ref{thm:multdpo_lb} and~\ref{thm:multdpo_tightness} immediately imply the following corollary between $\mathcal{L}_{\mathrm{PL\text{-}DPO}}$ and $\mathcal{L}_{\mathrm{MN\text{-}DPO}}$.

\begin{corollary}
\label{cor:multdpo_dpo_bound}
For every context $x$ and every training instance $(x,\mathcal{E}^{p},\mathcal{E}^{d})$,
\begin{equation}
\mathcal{L}_{\mathrm{PL\text{-}DPO}}(x,\mathcal{E}^{p},\mathcal{E}^{d})
\le
\mathcal{L}_{\mathrm{Mult\text{-}DPO}}(x,\mathcal{E}^{p},\mathcal{E}^{d}),
\label{eq:multdpo_dpo_order}
\end{equation}
and
\begin{equation}
0
\le
\mathcal{L}_{\mathrm{Mult\text{-}DPO}}(x,\mathcal{E}^{p},\mathcal{E}^{d})
-
\mathcal{L}_{\mathrm{PL\text{-}DPO}}(x,\mathcal{E}^{p},\mathcal{E}^{d})
\le
(k-1)\log\!\left(1+\frac{A_{\pi_\theta}}{B_{\pi_\theta}}\right).
\label{eq:multdpo_dpo_tightness}
\end{equation}
\end{corollary}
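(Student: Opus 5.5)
The corollary follows from Theorems~\ref{thm:multdpo_lb} and~\ref{thm:multdpo_tightness} once we apply them to the policy-induced weights and take negative logarithms. Fix a training instance $(x,\mathcal{E}^{p},\mathcal{E}^{d})$ and a policy $\pi_\theta$, and take $w_{\pi_\theta}(e\mid x)$ as in Eq.~\eqref{eq:multdpo_dpo_weight} with any fixed choice of the proportionality constant. Both theorems hold for \emph{arbitrary} positive weights, so they apply to $w_{\pi_\theta}$. These weights are strictly positive whenever $\pi_\theta(e\mid x)>0$ and $\pi_{\mathrm{ref}}(e\mid x)>0$, which we assume so that both losses are finite.

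There is one bookkeeping point to settle first. Both likelihoods are invariant under rescaling all weights by a common factor $c>0$. In Eq.~\eqref{eq:multdpo_pl_likelihood} each factor is a ratio of degree-one homogeneous expressions. In Eq.~\eqref{eq:multdpo_mn_likelihood} each factor $w/W$ is scale-free. The ratio $A_{\pi_\theta}/B_{\pi_\theta}$ is scale-free as well. Hence the omitted constant $Z_{\pi_\theta}(x)^\beta$ and the additive constant $C$ in Eq.~\eqref{eq:multdpo_final_obj} do not matter, provided $\mathcal{L}_{\mathrm{Mult\text{-}DPO}}$ is read as exactly $-\log p_{\mathrm{MN}}(\Omega_x\mid x,\mathcal{E};w_{\pi_\theta})$. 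That is its defining first line, with $C=-\log k!$. I will state this explicitly so the comparison with $\mathcal{L}_{\mathrm{PL\text{-}DPO}}$ involves no stray constants.

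Next, Theorem~\ref{thm:multdpo_lb} gives $p_{\mathrm{PL}}\ge p_{\mathrm{MN}}>0$. Since $-\log$ is decreasing on $(0,\infty)$, this yields $\mathcal{L}_{\mathrm{PL\text{-}DPO}}\le\mathcal{L}_{\mathrm{Mult\text{-}DPO}}$, which is Eq.~\eqref{eq:multdpo_dpo_order} and the left inequality of Eq.~\eqref{eq:multdpo_dpo_tightness}. For the right inequality, observe that
\[
\mathcal{L}_{\mathrm{Mult\text{-}DPO}}-\mathcal{L}_{\mathrm{PL\text{-}DPO}}=\log\frac{p_{\mathrm{PL}}(\Omega_x\mid x,\mathcal{E};w_{\pi_\theta})}{p_{\mathrm{MN}}(\Omega_x\mid x,\mathcal{E};w_{\pi_\theta})}.
\]
Theorem~\ref{thm:multdpo_tightness} with $A=A_{\pi_\theta}$ and $B=B_{\pi_\theta}$ bounds this ratio by $(1+A_{\pi_\theta}/B_{\pi_\theta})^{k-1}$, and $\log$ is increasing, which gives the upper bound.

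The only real obstacle is the edge case $\mathcal{E}^{d}=\emptyset$. There Theorem~\ref{thm:multdpo_tightness} is not available, and $B_{\pi_\theta}=0$ makes the right-hand side undefined. I would handle it in one of two ways. One option is to state the second inequality under the hypothesis $|\mathcal{E}^{d}|\ge1$ inherited from the theorem, while noting that Eq.~\eqref{eq:multdpo_dpo_order} needs only $|\mathcal{E}^{p}|\ge1$. The other is to interpret the right-hand side as $+\infty$ when $B_{\pi_\theta}=0$ and $k\ge2$. For $k=1$ the gap is exactly $0$, since both likelihoods equal $w(e_1)/W$; this is consistent with the bound being tight at $k=1$.
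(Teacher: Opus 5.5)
Your proposal is correct and follows the paper's own argument. You substitute the policy-induced weights, using scale invariance to drop $Z_{\pi_\theta}(x)^\beta$, apply Theorems~\ref{thm:multdpo_lb} and~\ref{thm:multdpo_tightness}, and take negative logarithms; your extra bookkeeping on $C=-\log k!$ and your remark that the upper bound in Eq.~\eqref{eq:multdpo_dpo_tightness} implicitly needs $|\mathcal{E}^{d}|\ge 1$ (inherited from Theorem~\ref{thm:multdpo_tightness}) are accurate refinements of points the paper leaves implicit.
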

The proof follows from Theorem~\ref{thm:multdpo_tightness} by substituting the policy-induced weights and taking negative logarithms. The corollary shows that the Mult-DPO loss is a tractable upper bound on the exact but otherwise intractable marginalized PL-DPO loss. In addition, it reveals that, for a fixed $k$-positive set and current policy, increasing the non-negligible policy-induced weight $\pi_\theta$ for the negatives $B_{\pi_\theta}$ sharpens the worst-case gap, suggesting that the MN surrogate becomes a better approximation of the intractable marginalized PL-DPO loss when richer or harder negatives are selected as candidates.

\subsection{Extension of Mult-DPO to Multi-Level Preference}
\label{sec:smn}

The preceding sections assume binary preferences from users, which cover the most common RS setting, including recommendation with implicit feedback \citep{rendle2009bpr}. In practice, however, user feedback can be more fine-grained. For example, users may provide different ratings for the items. Such feedback naturally induces a \textbf{multi-level preference structure}, where items are partitioned into multiple preference groups. To capture this richer preference structure, we extend Mult-DPO from binary set-wise preferences to \underline{mult}i-level set-wise preferences, denoted \textbf{Mult$^2$-DPO}.

\vspace{1mm}
\noindent \textbf{3.4.1 Problem Formulation of Multi-Level Preference.}
Suppose that, for a recommendation-relevant context $x$, the candidate set $\mathcal{E}$ is partitioned into $G\ge 2$ disjoint preference groups as:
\begin{equation}
\mathcal{E}
=
\bigcup_{g=1}^{G} \mathcal{E}^{(g)}, \ \
|\mathcal{E}^{(g)}| = k_g, \ \
\sum_{g=1}^{G} k_g = K,
\label{eq:multdpo_group_partition}
\end{equation}
where the group index denotes the preference level, i.e., items in $\mathcal{E}^{(g)}$ should outrank items in $\mathcal{E}^{(h)}$ if $g < h$, with no order among items within each group. We denote the multi-level preference event as:
\begin{equation}
\Omega^{\mathrm{grp}}_x
:=
\Big\{
e \succ e'
\ \big|\
e \in \mathcal{E}^{(g)},\
e' \in \mathcal{E}^{(h)},\
1 \le g < h \le G
\Big\}.
\label{eq:multdpo_group_event}
\end{equation}
For each boundary $g=1,\ldots,G-1$, we further define the \textbf{group-$g$-specific preference event} $\Omega^{(g)}_x := \{ e \succ e' \mid e \in \mathcal{E}^{(g)},\ e' \in \bigcup_{h=g+1}^{G} \mathcal{E}^{(h)} \}$. This event has the same form as the two-set preference event $\Omega_x$ in Eq.~\eqref{eq:set_rank} if we view $\mathcal{E}^{(g)}$ as $\mathcal{E}^{p}$ and
$\bigcup_{h=g+1}^{G} \mathcal{E}^{(h)}$ as $\mathcal{E}^{d}$. By definition, we have $\Omega^{\mathrm{grp}}_x = \bigcap_{g=1}^{G-1} \Omega^{(g)}_x$.
For each boundary $g=1,\ldots,G-1$ and any positive weights, we define
\begin{equation}
A_g := \sum_{e \in \mathcal{E}^{(g)}} w(e \mid x), \ 
B_g := \sum_{h=g+1}^{G}\sum_{e \in \mathcal{E}^{(h)}} w(e \mid x), \ 
W_g := A_g + B_g
=
\sum_{h=g}^{G}\sum_{e \in \mathcal{E}^{(h)}} w(e \mid x),
\label{eq:multdpo_group_masses}
\end{equation}
where $A_g$ is the total weight of group $g$ and $B_g$ is the total weight of all lower-preference groups.

\vspace{1mm}
\noindent \textbf{3.4.2 Marginalized PL Reward Model for Multi-Level Preference.}
By the sequential selection property of PL, once items from the higher-preference groups $\{\mathcal{E}^{(1)}, \ldots, \mathcal{E}^{(g-1)}\}$ have been placed, the events $\Omega^{(g)}_x$ become conditionally independent across $g$, which yields the following recursive factorization of the marginalized PL likelihood for the multi-level preference data:
\begin{equation}
p_{\mathrm{PL}}(\Omega^{\mathrm{grp}}_x \mid x, \mathcal{E}; w)
=
\prod_{g=1}^{G-1}
p_{\mathrm{PL}}
\!\left(
\Omega^{(g)}_x
\ \middle|\
x,\ \bigcup_{h=g}^{G} \mathcal{E}^{(h)};\ w
\right),
\label{eq:multdpo_group_pl_factorization}
\end{equation}
where each factor is a set-wise marginalized PL likelihood of the same form studied in Section~\ref{sec:mult}.

\vspace{1mm}
\noindent \textbf{3.4.3 Sequential Multinomial (SMN) Surrogate Model and SMN-DPO.}
This factorization in Eq.~\eqref{eq:multdpo_group_pl_factorization} naturally extends the multinomial lower bound group by group. Applying the two-set MN surrogate to each group-specific event yields the sequential multinomial (SMN) surrogate
\begin{equation}
p_{\mathrm{SMN}}(\Omega^{\mathrm{grp}}_x \mid x, \mathcal{E}; w)
:=
\prod_{g=1}^{G-1}
\left(
k_g!
\prod_{e \in \mathcal{E}^{(g)}} \frac{w(e \mid x)}{W_g}
\right),
\label{eq:multdpo_group_smn}
\end{equation}
By applying Theorem~\ref{thm:multdpo_lb} at each factor, Eq.~\eqref{eq:multdpo_group_smn} provides a tractable lower bound on the exact PL likelihood in Eq.~\eqref{eq:multdpo_group_pl_factorization}. Substituting the policy-induced weight $w_{\pi_\theta}(e \mid x)$ from Eq.~\eqref{eq:multdpo_dpo_weight} into Eq.~\eqref{eq:multdpo_group_smn} and taking the negative log-likelihood leads to the multi-group Mult-DPO (\textbf{Mult$^{2}$-DPO}) objective
\begin{align}
\mathcal{L}_{\text{Mult}^{2}\text{-DPO}}
&:=
-\log p_{\mathrm{SMN}}(\Omega^{\mathrm{grp}}_x \mid x, \mathcal{E}; w_{\pi_\theta})
\nonumber\\
&\;=\;
\sum_{g=1}^{G-1}
\left[
-\beta\sum_{e \in \mathcal{E}^{(g)}}
\log\frac{\pi_\theta(e \mid x)}{\pi_{\mathrm{ref}}(e \mid x)}
+
k_g \log
\sum_{h=g}^{G}
\sum_{e \in \mathcal{E}^{(h)}}
\left(\frac{\pi_\theta(e \mid x)}{\pi_{\mathrm{ref}}(e \mid x)}\right)^{\beta}
\right]
+ C',
\label{eq:multdpo_group_final}
\end{align}
where $C' = -\sum_{g=1}^{G-1} \log k_g!$ is a constant independent of $\theta$ and can be omitted during optimization. When $G = 2$, Eq.~\eqref{eq:multdpo_group_final} reduces to the binary Mult-DPO objective in Eq.~\eqref{eq:multdpo_final_obj}. The same boundary-wise reduction also gives a multi-level analogue of Corollary~\ref{cor:multdpo_dpo_bound}. If we define the PL-DPO loss
\begin{equation}
\mathcal{L}_{\mathrm{PL\text{-}DPO}}^{\mathrm{grp}}
:=
-\log p_{\mathrm{PL}}(\Omega^{\mathrm{grp}}_x \mid x, \mathcal{E}; w_{\pi_\theta}),
\label{eq:multdpo_group_pl_dpo_loss}
\end{equation}
and let $A_{\pi_\theta,g}$ and $B_{\pi_\theta,g}$ denote the group masses in Eq.~\eqref{eq:multdpo_group_masses} after substituting $w_{\pi_\theta}$ for $w$, we have:
\begin{corollary}[Multi-level loss-gap bound]
\label{cor:multdpo_group_dpo_bound}
For every context $x$ and every multi-level preference instance with nonempty ordered groups, we have
\begin{equation}
0
\le
\mathcal{L}_{\mathrm{Mult}^{2}\mathrm{-DPO}}
-
\mathcal{L}_{\mathrm{PL\text{-}DPO}}^{\mathrm{grp}}
\le
\sum_{g=1}^{G-1}
(k_g-1)
\log\!\left(1+\frac{A_{\pi_\theta,g}}{B_{\pi_\theta,g}}\right).
\label{eq:multdpo_group_dpo_tightness}
\end{equation}
\end{corollary}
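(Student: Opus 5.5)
The plan is to reduce the multi-level statement to the two-set results boundary by boundary. First, write both losses as sums over the boundaries $g=1,\ldots,G-1$. By the recursive factorization in Eq.~\eqref{eq:multdpo_group_pl_factorization}, evaluated at the policy-induced weights $w_{\pi_\theta}$,
\begin{equation*}
\mathcal{L}_{\mathrm{PL\text{-}DPO}}^{\mathrm{grp}}
=
\sum_{g=1}^{G-1} -\log p_{\mathrm{PL}}\!\left(\Omega^{(g)}_x \,\middle|\, x,\ \textstyle\bigcup_{h=g}^{G}\mathcal{E}^{(h)};\ w_{\pi_\theta}\right).
\end{equation*}
By definition in Eq.~\eqref{eq:multdpo_group_smn} and Eq.~\eqref{eq:multdpo_group_final}, $\mathcal{L}_{\mathrm{Mult}^{2}\mathrm{-DPO}}$ is the sum over $g$ of $-\log\big(k_g!\prod_{e\in\mathcal{E}^{(g)}} w_{\pi_\theta}(e\mid x)/W_{\pi_\theta,g}\big)$. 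We keep the constant $C'$, so that this is exactly the negative log of the SMN likelihood.

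Second, identify each summand pair with a two-set instance. Take positive set $\mathcal{E}^{(g)}$ (size $k_g\ge 1$) and negative set $\bigcup_{h>g}\mathcal{E}^{(h)}$, which is nonempty because all groups are nonempty and $g\le G-1$. The restricted candidate set is $\bigcup_{h\ge g}\mathcal{E}^{(h)}$. The $g$-th SMN factor is then exactly $p_{\mathrm{MN}}(\Omega_x\mid\cdot)$ from Eq.~\eqref{eq:multdpo_mn_likelihood} for this instance, with cumulative masses $A=A_{\pi_\theta,g}$ and $B=B_{\pi_\theta,g}$. The $g$-th PL factor is the corresponding marginalized PL likelihood in Eq.~\eqref{eq:multdpo_pl_likelihood}. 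A point to check is that the shared proportionality constant $Z_{\pi_\theta}(x)^\beta$ in Eq.~\eqref{eq:multdpo_dpo_weight} cancels in both factors. This holds because both expressions are homogeneous of degree zero in the weights, and the mass ratio $A/B$ is likewise invariant.

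Third, apply Theorem~\ref{thm:multdpo_tightness} to each boundary. This requires both sets to be nonempty, which was verified above. It gives
\begin{equation*}
1\le \frac{p_{\mathrm{PL}}(\Omega^{(g)}_x\mid\cdot)}{p_{\mathrm{MN},g}}\le \left(1+\frac{A_{\pi_\theta,g}}{B_{\pi_\theta,g}}\right)^{k_g-1}.
\end{equation*}
Taking logarithms, the per-boundary gap $-\log p_{\mathrm{MN},g}+\log p_{\mathrm{PL}}(\Omega^{(g)}_x\mid\cdot)$ lies in $[0,(k_g-1)\log(1+A_{\pi_\theta,g}/B_{\pi_\theta,g})]$. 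Summing over $g$ yields Eq.~\eqref{eq:multdpo_group_dpo_tightness}, since the total gap is the sum of the per-boundary gaps.

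The only nontrivial ingredient is the factorization in Eq.~\eqref{eq:multdpo_group_pl_factorization}, which the paper states as a property of PL; I take it as given. If it needed justification, I would argue from the sequential-selection form in Eq.~\eqref{eq:pl_full_rank}. A permutation is consistent with $\Omega^{\mathrm{grp}}_x$ exactly when it lists all of group 1, then all of group 2, and so on. The PL probability of such a permutation splits into the product of the first $k_1$ selection terms and the PL probability of the remaining ordering on $\bigcup_{h\ge2}\mathcal{E}^{(h)}$. Summing over within-group orders and inducting on $G$ then gives the product. Beyond this step, I expect the main care to lie in bookkeeping: matching the constants $k_g!$ and confirming that nonemptiness of the lower groups licenses Theorem~\ref{thm:multdpo_tightness} at every boundary.
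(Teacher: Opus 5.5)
Your proposal is correct and follows the paper's proof. Both factorize the multi-level PL likelihood across boundaries, treat each factor as a two-set instance on $\bigcup_{h\ge g}\mathcal{E}^{(h)}$ with a nonempty negative set, apply Theorems~\ref{thm:multdpo_lb} and~\ref{thm:multdpo_tightness} per boundary, and sum the log-gaps. Your permutation-splitting argument for the factorization (sum over within-group orders of the top group, induct on $G$, and let the last group sum to one by Lemma~\ref{lem:PL-normalization-app}) is a more explicit justification than the paper's appeal to conditional independence.
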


\subsection{Complexity Analysis}
\label{sec:complexity}

Although Mult-DPO aligns LLM-based RSs with user preferences over more candidates per training step than its single-positive counterparts (e.g., vanilla DPO and S-DPO), the per-step time complexity remains comparable thanks to the shared prompt prefix that allows \textbf{KV-cache reuse} of the prompt across all candidates. Concretely, let $N_{x}$ denote the number of prompt tokens and $N_{i}$ the average number of tokens per item. With KV-cache reuse, the per-step self-attention cost of every DPO-style method takes the form $\mathcal{O}(N_x^2 + cN_xN_i + cN_i^2)$. Here, $c$ is the number of candidates scored per step: $c = 2$ for vanilla DPO, $c = 1 + (K - k)$ for S-DPO, and $c = K$ for Mult-DPO, where $k$ and $K$ are the numbers of positives and full candidates, respectively. Therefore, when $K \ll N_{x}$, which is typical in the RS settings as the prompt usually contains user context, user/item features, interaction history, or conversation, the prompt-level self-attention term $N_{x}^{2}$ dominates; in our experiments, we indeed find that Mult-DPO exhibits similar wall-clock times to vanilla DPO and S-DPO.

\section{Empirical Study}
\label{sec:experiments}

In this section, we evaluate Mult-DPO and its extension to multi-level preference alignment, i.e., Mult$^{2}$-DPO, on both general and conversational recommendation tasks, aiming to answer the following three research questions: \textbf{(RQ1)} how does Mult-DPO perform compared to other supervised finetuning (SFT)-based baselines or DPO-style baselines for aligning LLM-based RSs under set-wise feedback; \textbf{(RQ2)} how well does the MN DPO loss of Mult-DPO match the theoretical upper bound on the marginalized PL DPO loss from Corollary~\ref{cor:multdpo_dpo_bound}; and \textbf{(RQ3)} does Mult$^{2}$-DPO yield further performance gains over Mult-DPO when explicit multi-level preferences are available?

\subsection{Datasets and Experimental Setup}
\label{sec:exp_setup}

For general recommendation tasks, we use the \texttt{MovieLens-10M}~\citep{harper2015movielens} and \texttt{Goodreads}~\citep{wan2018goodreads} datasets. For \texttt{MovieLens-10M}, where explicit user-item ratings are available, we first treat items rated $5$ as positives and randomly sampled unrated items as negatives, and then treat different ratings as different preference levels in Section~\ref{sec:exp_multlevel} to evaluate Mult$^2$-DPO. For conversational recommendation, we adapt the \texttt{Reddit-V2}~\citep{he2023zeroshot, zhu2025rankgrpo} dataset, where we combine recommendations from different Reddit users on the same conversation as multiple ground truths. For validation, we use sampled negatives with $200$ candidates in total for efficient model selection, and the test results are reported against the full catalog. We report NDCG@$K$ for $K \in \{5, 15, 20\}$. Full implementation details are provided in Appendix~\ref{sec:imp}.

\begin{figure}[t]
\centering
\includegraphics[width=\linewidth]{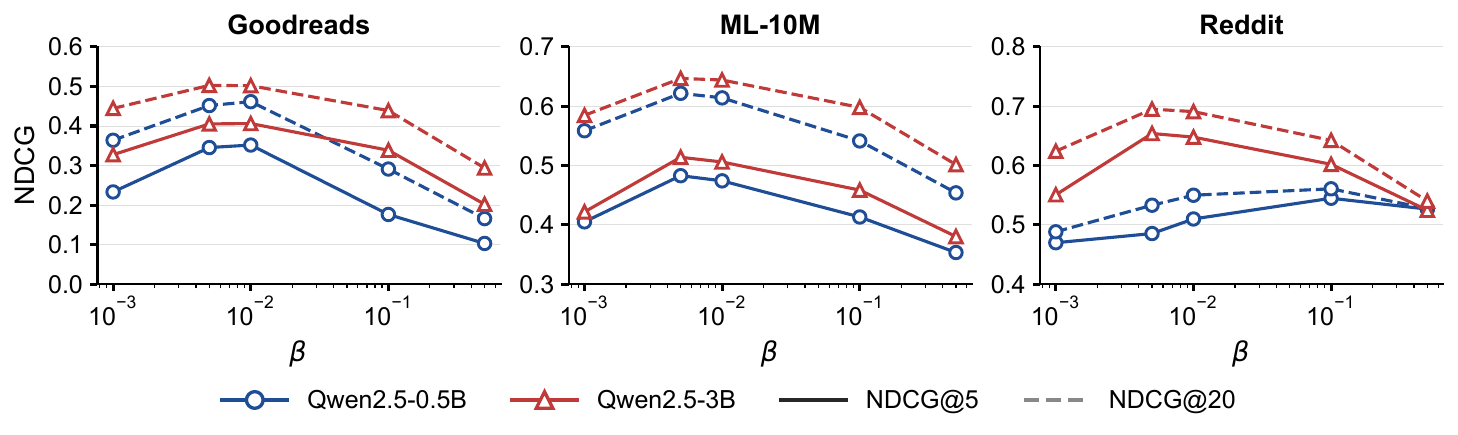}
\vspace{-2mm}
\caption{Validation NDCG@$5$ (solid) and NDCG@$20$ (dashed) versus the regularization strength $\beta$.}
\label{fig:hparam_sensitivity_val}
\vspace{-3mm}
\end{figure}

\subsection{Training Dynamics and Hyperparameter Analysis}
\label{sec:exp_hyper}

We first study the effects of the regularization strength $\beta$ in Eq.~\eqref{eq:multdpo_rlhf} on the validation sets, summarized in Figure~\ref{fig:hparam_sensitivity_val}. Due to space limitations and constraint of computational budgets, we present the results with the \texttt{Qwen-2.5-0.5B-Instruct}~\citep{yang2025qwen} and \texttt{Qwen-2.5-3B-Instruct} backbones, while we use the insight to select $\beta$ to run Mult-DPO on the Goodreads dataset with $\{0.5B, 1.5B, 3B, 7B\}$ backbones in \ref{sec:exp_more_backbones} to show its generalization ability w.r.t. model scale. We observe that too small a $\beta$ lets $\pi_\theta$ drift too far from a strong $\pi_{\mathrm{ref}}$, while too large a $\beta$ over-regularizes and erodes the alignment signal. The optimal $\beta$ is consistently smaller on \texttt{MovieLens-10M} and \texttt{Goodreads} than on \texttt{Reddit-V2}. This may be because the reference backbone provides a stronger initialization for conversational recommendation, where the dialogue context can be exploited via language understanding~\citep{he2023zeroshot}, than for general recommendation, where collaborative-filtering signals that are unavailable to pretrained LLMs play a crucial role~\citep{zhu2025collaborative}.

\subsection{Comparison with Baselines}
\label{sec:exp_main}

We then evaluate Mult-DPO against representative baselines on the test set, where the best $\beta$ for each DPO-style method is selected on the validation set as specified in Section~\ref{sec:exp_hyper}. All DPO-style baselines share the same reference backbone and training protocol as Mult-DPO for fair comparison.

\vspace{1mm}
\noindent \textbf{Baselines.}
We compare Mult-DPO against two categories of LLM-based recommenders: SFT-only baselines (BigRec~\citep{bao2023tallrec} and $\text{D}^{3}$~\citep{bao2024decoding}) that fine-tune the LLM on context--item demonstrations, and DPO-style alignment baselines (vanilla DPO~\citep{rafailov2023direct}, DMPO~\citep{bai2024dmpo}, S-DPO~\citep{chen2024softmax}, and LiPO (BT)~\citep{liu2024lipo}). LiPO is a family of learning-to-rank algorithms; the most relevant variant, LiPO (BT), sums the pairwise Bradley--Terry losses over all positive--negative pairs in a single step. Full details are provided in Appendix~\ref{sec:baselines}. For $\text{D}^{3}$ on conversational recommendation, we follow~\citet{zhu2024collaborative} to extract collaborative-filtering signals and train an EASE~\citep{steck2019embarrassingly} model as the text-free assistant.

\begin{table*}[t]
\caption{Comparison between Mult-DPO and DPO-style baselines on the test sets of \texttt{Goodreads}, \texttt{MovieLens-10M}, and \texttt{Reddit-V2}. Evaluation is conducted with the entire catalog as the candidates. Most results are significant as standard error is between 0.0020-0.0035 for N@5, N@15 and N@20.} 
\vspace{2mm}
\centering
\fontsize{6.8pt}{9.5pt}\selectfont
\setlength{\tabcolsep}{4pt}
\begin{tabular}{l|ccc|ccc|ccc}
\toprule
\multirow{2}{*}{\textbf{Method}}
& \multicolumn{3}{c|}{\texttt{Goodreads}}
& \multicolumn{3}{c|}{\texttt{MovieLens-10M}}
& \multicolumn{3}{c}{\texttt{Reddit-V2}} \\
\cmidrule(lr){2-4} \cmidrule(lr){5-7} \cmidrule(lr){8-10}
& \textbf{N@5} & \textbf{N@15} & \textbf{N@20}
& \textbf{N@5} & \textbf{N@15} & \textbf{N@20}
& \textbf{N@5} & \textbf{N@15} & \textbf{N@20} \\
\midrule
\rowcolor{gray!15} \texttt{Qwen2.5-0.5B-Instruct} (zero-shot) & 0.0085 & 0.0090 & 0.0090 & 0.0136 & 0.0156 & 0.0156 & 0.0175 & 0.0207 & 0.0216 \\
\multicolumn{10}{c}{\textit{SFT-based methods}} \\
\ \; + BigRec~\citep{bao2023tallrec}          & 0.0776 & 0.1069 & 0.1177 & \textbf{0.0657} & 0.0884 & 0.0962 & 0.1043 & 0.1056 & 0.1083  \\
\ \; + $\text{D}^{3}$~\citep{bao2024decoding}             & 0.0818 & 0.1122 & 0.1210 & 0.0612 & 0.0881 & 0.0948 & 0.1015 & 0.1037 & 0.1064\\
\multicolumn{10}{c}{\textit{DPO-based methods}} \\
\ \; + Vanilla DPO~\citep{rafailov2023direct} & 0.0389 & 0.0558 & 0.0622 & 0.0426 & 0.0710 & 0.0834 & 0.0816 & 0.0819 & 0.0838 \\
\ \; + DMPO~\citep{bai2024dmpo}               & 0.0586 & 0.0805 & 0.0845 & 0.0461 & 0.0701 & 0.0758 & 0.0875 & 0.0893 & 0.0912 \\
\ \; + S-DPO~\citep{chen2024softmax}          & 0.0762 & 0.1105 & 0.1192 & 0.0592 & 0.0920 & 0.1049 & 0.0931 & 0.0938 & 0.0985 \\
\ \; + LiPO (BT)~\citep{liu2024lipo}               & 0.0862 & 0.1198 & 0.1294 & 0.0622 & 0.0980 & 0.1100 & 0.0963 & 0.1020 & 0.1060 \\
\rowcolor{green!10} \ \; + Mult-DPO (ours)    & \textbf{0.0947} & \textbf{0.1292} & \textbf{0.1406} & 0.0650 & \textbf{0.1001} & \textbf{0.1103} & \textbf{0.1097} & \textbf{0.1101} & \textbf{0.1154} \\
\midrule
\rowcolor{gray!15} \texttt{Qwen2.5-3B-Instruct} (zero-shot)     & 0.0149 & 0.0184 & 0.0184 & 0.0232 & 0.0340 & 0.0359 & 0.0633 & 0.0617 & 0.0641 \\
\multicolumn{10}{c}{\textit{SFT-based methods}} \\
\ \; + BigRec~\citep{bao2023tallrec}          & 0.1109 & 0.1527 & 0.1682 & 0.0747 & 0.1069 & 0.1180 & 0.1228 & 0.1247 & 0.1364  \\
\ \; + $\text{D}^{3}$~\citep{bao2024decoding} & 0.1254 & 0.1531 & 0.1678 & 0.0710 & 0.1038 & 0.1159 & 0.1195 & 0.1214 & 0.1332 \\
\multicolumn{10}{c}{\textit{DPO-based methods}} \\
\ \; + Vanilla DPO~\citep{rafailov2023direct} & 0.0870 & 0.1120 & 0.1202 & 0.0559 & 0.0806 & 0.0903 & 0.0915 & 0.0960 & 0.0998 \\
\ \; + DMPO~\citep{bai2024dmpo}               & 0.0932 & 0.1200 & 0.1359 & 0.0562 & 0.0875 & 0.0938 & 0.0981 & 0.1009 & 0.1032 \\
\ \; + S-DPO~\citep{chen2024softmax}          & 0.1181 & 0.1586 & 0.1693 & 0.0631 & 0.0989 & 0.1122 & 0.1043 & 0.1132 & 0.1219 \\
\ \; + LiPO (BT)~\citep{liu2024lipo}               & 0.1252 & 0.1611 & 0.1731 & 0.0672 & 0.1046 & 0.1185 & 0.1147 & 0.1234 & 0.1329 \\
\rowcolor{green!10} \ \; + Mult-DPO (ours)    & \textbf{0.1288} & \textbf{0.1678} & \textbf{0.1785} & \textbf{0.0751} & \textbf{0.1155} & \textbf{0.1300} & \textbf{0.1369} & \textbf{0.1431} & \textbf{0.1503} \\
\bottomrule
\end{tabular}
\label{tab:main_results}
\vspace{-3mm}
\end{table*}

\vspace{1mm}
\noindent \textbf{Results.}
From Table~\ref{tab:main_results} we can find that DMPO extends vanilla DPO with multiple negatives for one positive item, but the arithmetic mean it takes for the negatives inside the BT sigmoid is an ad-hoc construction without a coherent ranking-likelihood interpretation. S-DPO replaces this mean with the softmax derived as the closed-form $k\!=\!1$ case of the marginalized PL likelihood (a special case of Theorem~\ref{thm:multdpo_tightness}), providing a principled treatment of multiple negatives that respects the underlying preference model. LiPO (BT) further admits multi-positive supervision by summing up pairwise losses, yet it still factorizes the set-wise event into independent BT pairs and discards the constraint that the positives jointly dominate the negative set. In contrast, Mult-DPO fundamentally removes this spurious independence: we note that under the multinomial surrogate every positive is contrasted against the same cumulative negative weight $B$, which preserves the joint set-wise structure and is provably an upper bound on the marginalized PL-DPO loss (Corollary~\ref{cor:multdpo_dpo_bound}). The margin over LiPO (BT) is most pronounced on \texttt{Reddit-V2}, where multi-positive ground truths are densest, and widens with backbone scale, suggesting that the joint signal pays off most when the policy has the capacity to exploit it. Mult-DPO also surpasses BigRec and $\text{D}^{3}$ from the same SFT initialization, indicating that set-wise alignment supplies supervision that demonstration matching cannot recover. Beyond accuracy, we note that compared with the strongest baseline LiPO (BT), in Mult-DPO, the sharing $B$ across positives reduces the loss-aggregation cost from $\mathcal{O}((K-k)\cdot k)$ to $\mathcal{O}(K)$.

\subsection{On Upper Bound of MN-DPO to Marginalized PL-DPO and Its Tightness} 

\begin{figure}[t]
\centering
\includegraphics[width=0.92\textwidth]{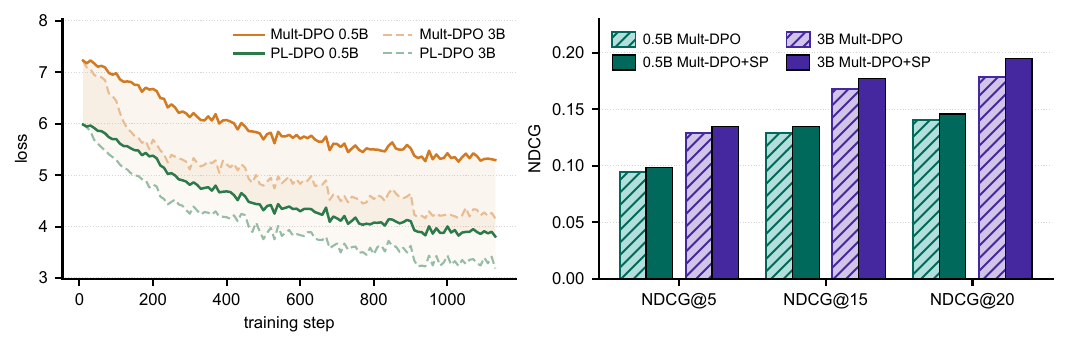}\vspace{-2mm}
\caption{\textbf{Left}: Mult-DPO loss and exact marginalized PL-DPO loss; \textbf{Right}: NDCG on the \texttt{Goodreads} dataset of Mult-DPO with SPRec-style epoch-level dynamic hard negatives.}
\label{fig:tightness}
\vspace{-4mm}
\end{figure}

For \textbf{RQ2}, we first restrict the training set to samples with at most three positives, for which the exact marginalized PL-DPO loss can be computed. As shown in Fig.~\ref{fig:tightness} (left), on this filtered subset, the training dynamics of Mult-DPO and the exact marginalized PL-DPO verify the upper-bound relation in Corollary~\ref{cor:multdpo_dpo_bound}. However, this restriction removes most training instances, making the resulting test-set evaluation uninformative. To further examine whether harder negatives tighten the bound and thereby improve alignment with multi-positive supervision, we consider dynamic negative sampling. Since the hardness of a negative item is policy-dependent and changes throughout training, we adapt  SPRec~\citep{gao2025sprec} and resample negatives at the epoch level. Concretely, at the beginning of each epoch, negatives are drawn from a temperature-scaled categorical distribution induced by the current policy weights with temperature set to 0.1. Fig.~\ref{fig:tightness} (right) reports the  test performance, where we find that introducing harder negatives can indeed improve the alignment ability of Mult-DPO.

\subsection{Extension to Multi-Level Preference on the MovieLens Dataset}
\label{sec:exp_multlevel}

\begin{wrapfigure}{r}{0.48\textwidth}
\vspace{-4mm}
\centering
\includegraphics[width=0.48\textwidth]{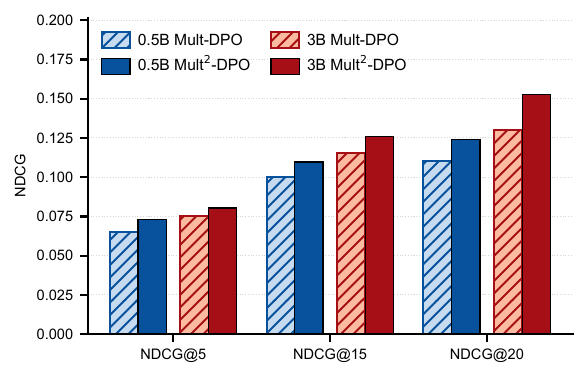}
\vspace{-6mm}
\caption{Comparison of Mult-DPO and \ Mult$^{2}$-DPO NDCG on \texttt{MovieLens-10M} dataset.}
\label{fig:multi2_vs_multi}
\vspace{-3mm}
\end{wrapfigure}
To answer \textbf{RQ3}, on \texttt{MovieLens-10M}, we partition items by their ratings into $G = 4$ ordered preference groups, with rating $=5$ (i.e., the positive set in the two-set case) as the group with the highest preference, and with randomly sampled unrated items appended after the high-rated groups (rating $> 3$) to keep the candidate set comparable to the binary setting of Section \ref{sec:exp_main}. Mult$^{2}$-DPO aggregates the per-boundary multinomial losses across the three group boundaries via Eq.~\eqref{eq:multdpo_group_final}, and is compared against the binary Mult-DPO baseline that merges the four groups into one positive-negative split. Figure~\ref{fig:multi2_vs_multi} shows that Mult$^{2}$-DPO outperforms the binary counterpart at every cutoff, with a $\sim\!12\%$ gain in NDCG at the 0.5B backbone ($0.0732$ vs.\ $0.0650$ at NDCG@5). In addition, the improvement can be generalized to the 3B backbone. This further confirms that Mult$^{2}$-DPO provides a stronger alignment signal by preserving the richer preference structure of explicit ratings.

\section{Conclusion}
\label{sec:conclusion}

In this paper, we presented \textbf{Mult-DPO}, a new DPO-style framework for aligning LLM-based RSs with preferences over multiple candidates. The key contribution is a multinomial surrogate event likelihood that yields a tractable closed-form DPO-style loss that is a provable upper bound on the otherwise intractable marginalized Plackett-Luce loss. We also characterize the tightness of this bound in terms of the number of positives and the cumulative policy-induced weights of the selected negatives. Empirically, Mult-DPO and its multi-level extension, i.e., Mult$^{2}$-DPO, consistently outperform various  DPO-based baselines on both general and conversational recommendation benchmarks, suggesting that explicitly modeling the joint set-wise structure of user feedback is both effective and efficient for preference-based alignment in LLM-based recommendation tasks.

\bibliographystyle{plainnat}
\bibliography{neurips_2026}

\newpage
\appendix

\textbf{\large Appendix}
\vspace{-2mm}

\section{Proofs}
\label{app:proofs}
\label{appendix:mult-dpo}

In this section, we provide the proofs and derivations for Section~\ref{sec:method} in the main paper, including the \textbf{\textit{(i)}} marginalized Plackett-Luce likelihood, \textbf{\textit{(ii)}} its inclusion-exclusion form, \textbf{\textit{(iii)}} the multinomial surrogate likelihood, \textbf{\textit{(iv)}} the lower-bound relation between MN surrogate and PL likelihoods, \textbf{\textit{(v)}} the tightness analysis, and \textbf{\textit{(vi)}} the extension of Mult-DPO to multi-level preference alignment of LLMs.

\subsection{Full-Rank Plackett-Luce Model and Its Recursions.}

Recall that in the main paper, we use $\mathcal{E}$ to denote the candidate set associated with a recommendation-related context $x$. Let $K := |\mathcal{E}|$ and let $\tau$ be a permutation of $\{1, \ldots, K\}$ representing a full ranking of the items in $\mathcal{E}$. We write $e_{\tau(t)}$ for the item placed at rank $t$ under $\tau$. For any $t \in \{1, \ldots, K\}$, define the prefix ranking of length $t$ and the suffix ranking of the remaining items as
\[
\tau_{1:t} := (\tau(1), \tau(2), \ldots, \tau(t)), \ \
\tau_{t+1:K} := (\tau(t+1), \tau(t+2), \ldots, \tau(K)).
\]
The suffix candidate set after selecting the first $t$ ranked items is defined as
\[
\mathcal{E}_{t+1} := \mathcal{E} \setminus \{e_{\tau(1)}, \ldots, e_{\tau(t)}\}.
\]
The Plackett-Luce (PL) \citep{plackett1975analysis,luce1959individual} model assigns the likelihood to a full ranking $\tau$:
\begin{subequations}\label{eq:PL-full}
\begin{align}
p_{\mathrm{PL}}(\tau \mid x, \mathcal{E}; w)
&=
\prod_{j=1}^{K}
\frac{w(e_{\tau(j)})}{\sum_{l=j}^{K} w(e_{\tau(l)})}
\label{eq:PL-full:a}
\\
&=
\frac{w(e_{\tau(1)})}{\sum_{e \in \mathcal{E}} w(e)}
\cdot
\prod_{j=2}^{K}
\frac{w(e_{\tau(j)})}{\sum_{l=j}^{K} w(e_{\tau(l)})}
\nonumber\\
&=
\frac{w(e_{\tau(1)})}{\sum_{e \in \mathcal{E}} w(e)}
\cdot
p_{\mathrm{PL}}\!\left(
\tau_{2:K} \mid x, \mathcal{E}_{2}; w
\right)
\label{eq:PL-full:c}
\\
&=
\left(
\prod_{t=1}^{k}
\frac{w(e_{\tau(t)})}{\sum_{e \in \mathcal{E}_{t}} w(e)}
\right)
\cdot
\prod_{j=k+1}^{K}
\frac{w(e_{\tau(j)})}{\sum_{l=j}^{K} w(e_{\tau(l)})}
\nonumber\\
&=
\left(
\prod_{t=1}^{k}
\frac{w(e_{\tau(t)})}{\sum_{e \in \mathcal{E}_{t}} w(e)}
\right)
\cdot
p_{\mathrm{PL}}\!\left(
\tau_{k+1:K} \mid x, \mathcal{E}_{k+1}; w
\right),
\label{eq:PL-full:e}
\end{align}
\end{subequations}
where Eqs.~\eqref{eq:PL-full:c} and~\eqref{eq:PL-full:e} are the one-step and $k$-step recursions, respectively.

\vspace{1mm}

\begin{lemma}[PL normalization on a finite candidate set]
\label{lem:PL-normalization-app}
Let $\mathcal{E} = \{e_1, \ldots, e_K\}$ be a finite candidate set and let $S_K$ denote the set of all permutations of $\{1, \ldots, K\}$. Then the PL likelihood in Eq.~\eqref{eq:PL-full:a} defines a valid distribution over full rankings:
\begin{equation}
\sum_{\tau \in S_K} p_{\mathrm{PL}}(\tau \mid x, \mathcal{E}; w) = 1.
\label{eq:PL-normalization-app}
\end{equation}
\end{lemma}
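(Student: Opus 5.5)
We argue by induction on $K = |\mathcal{E}|$, using the one-step recursion in Eq.~\eqref{eq:PL-full:c}. The idea is to peel off the top-ranked item. Summing over the remaining items then reduces to the same statement on a smaller candidate set.

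For the base case $K=1$, there is a single permutation. Its likelihood is $w(e_1)/w(e_1)=1$, which is well defined since all weights are positive. For the inductive step, assume Eq.~\eqref{eq:PL-normalization-app} holds for every candidate set of size $K-1$ with positive weights. We partition $S_K$ according to the first-ranked item $e_{\tau(1)} = e_i$, for $i=1,\ldots,K$. Each permutation with $\tau(1)=i$ corresponds bijectively to a full ranking $\tau_{2:K}$ of the suffix set $\mathcal{E}\setminus\{e_i\}$. By Eq.~\eqref{eq:PL-full:c},
\begin{equation*}
\sum_{\tau \in S_K} p_{\mathrm{PL}}(\tau \mid x,\mathcal{E};w)
=
\sum_{i=1}^{K}
\frac{w(e_i)}{\sum_{e\in\mathcal{E}} w(e)}
\sum_{\tau'} p_{\mathrm{PL}}\!\left(\tau' \mid x, \mathcal{E}\setminus\{e_i\}; w\right),
\end{equation*}
where $\tau'$ ranges over all full rankings of $\mathcal{E}\setminus\{e_i\}$. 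This suffix set has size $K-1$, so the induction hypothesis makes each inner sum equal to $1$. What remains is $\sum_i w(e_i)/\sum_{e\in\mathcal{E}} w(e) = 1$, which closes the induction. Nonnegativity of each term is immediate from positivity of the weights, so $p_{\mathrm{PL}}$ is a valid distribution.

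The only point needing care is the bijection between $\{\tau\in S_K:\tau(1)=i\}$ and rankings of the suffix set. One must check that the factors $\prod_{j\ge 2}$ in Eq.~\eqref{eq:PL-full:a} coincide exactly with the PL likelihood on $\mathcal{E}_2$. They do, because each denominator $\sum_{l\ge j} w(e_{\tau(l)})$ involves only items ranked from position $j$ onward, and none of these is $e_i$. Up to re-indexing, this is exactly the content of Eq.~\eqref{eq:PL-full:c}, so it is essentially bookkeeping rather than a real obstacle.
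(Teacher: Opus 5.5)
Your proof is correct and follows the paper's own route: partition $S_K$ by the top-ranked item, apply the one-step recursion in Eq.~\eqref{eq:PL-full:c}, and use induction on $K$ so that each inner sum over suffix rankings equals one and the outer sum $\sum_i w(e_i)/\sum_{e\in\mathcal{E}} w(e)$ equals one. Your explicit base case and your check that the suffix factors are exactly the PL likelihood on $\mathcal{E}\setminus\{e_i\}$ are somewhat more careful than the paper's write-up, but the argument is the same.
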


\begin{proof}
When $K = 1$, Eq.~\eqref{eq:PL-normalization-app} trivially holds. For arbitrary $K$, we have
\begin{subequations}\label{eq:plnorm-app}
\begin{align}
\sum_{\tau \in S_K} p_{\mathrm{PL}}(\tau \mid x, \mathcal{E}; w)
&=
\sum_{i=1}^{K}\ \sum_{\rho \in \mathrm{Perm}(\mathcal{E} \setminus \{e_i\})}
p_{\mathrm{PL}}\bigl([e_i, \rho] \mid x, \mathcal{E}; w\bigr)
\label{eq:plnorm-app:b}\\
&=
\sum_{i=1}^{K}\ \sum_{\rho \in \mathrm{Perm}(\mathcal{E} \setminus \{e_i\})}
\frac{w(e_i)}{\sum_{i=1}^{K} w(e_i)}
\cdot
p_{\mathrm{PL}}\bigl(\rho \mid x, \mathcal{E} \setminus \{e_i\}; w\bigr)
\label{eq:plnorm-app:c}\\
&=
\sum_{i=1}^{K}
\frac{w(e_i)}{\sum_{i=1}^{K} w(e_i)}
\left(
\sum_{\rho \in \mathrm{Perm}(\mathcal{E} \setminus \{e_i\})}
p_{\mathrm{PL}}\bigl(\rho \mid x, \mathcal{E} \setminus \{e_i\}; w\bigr)
\right)
\nonumber\\
&=
\sum_{i=1}^{K} \frac{w(e_i)}{\sum_{i=1}^{K} w(e_i)}
=
1,
\label{eq:plnorm-app:e}
\end{align}
\end{subequations}
\vspace{0.22mm}
where the last step follows by induction on $K$.
\end{proof}

\vspace{2mm}

\subsection{Proof of the Marginalized PL Likelihood in Eq.~\eqref{eq:multdpo_pl_likelihood}.}
\label{sec:proof_marginalized_pl}

Recall that the positives are $\mathcal{E}^{p} = \{e_1, \ldots, e_k\}$ and the negatives are $\mathcal{E}^{d} = \{e_{k+1}, \ldots, e_K\}$. Let $S_k$ denote the set of permutations of $\{1, \ldots, k\}$ for the positive items. Then
\begin{subequations}\label{eq:kfac-app}
\begin{align}
p_{\mathrm{PL}}(\Omega_x \mid x, \mathcal{E}; w)
&=
\sum_{\rho \in S_k}
\ \sum_{\tau_{k+1:K}}
p_{\mathrm{PL}}\!\left(
[\rho, \tau_{k+1:K}] \mid x, \mathcal{E}; w
\right)
\label{eq:kfac-app:a}\\
&=
\sum_{\rho \in S_k} \sum_{\tau_{k+1:K}}
\left(
\prod_{t=1}^{k}
\frac{w(e_{\rho(t)})}
{\sum_{e \in \mathcal{E} \setminus \{e_{\rho(1)}, \ldots, e_{\rho(t-1)}\}} w(e)}
\right)
p_{\mathrm{PL}}\!\left(
\tau_{k+1:K} \mid x, \mathcal{E}^{d}; w
\right)
\label{eq:kfac-app:b}\\
&=
\sum_{\rho \in S_k}
\left(
\prod_{t=1}^{k}
\frac{w(e_{\rho(t)})}
{\sum_{e \in \mathcal{E} \setminus \{e_{\rho(1)}, \ldots, e_{\rho(t-1)}\}} w(e)}
\right)
\sum_{\tau_{k+1:K}}
p_{\mathrm{PL}}\!\left(
\tau_{k+1:K} \mid x, \mathcal{E}^{d}; w
\right)
\nonumber\\
&=
\sum_{\rho \in S_k}
\prod_{t=1}^{k}
\frac{w(e_{\rho(t)})}
{\sum_{e \in \mathcal{E} \setminus \{e_{\rho(1)}, \ldots, e_{\rho(t-1)}\}} w(e)}
\label{eq:kfac-app:c}\\
&=
\sum_{\rho \in S_k}
\prod_{t=1}^{k}
\frac{w(e_{\rho(t)})}
{B + \sum_{j=t}^{k} w(e_{\rho(j)})}.
\label{eq:kfac-app:d}
\end{align}
\end{subequations}
In the last step, the remaining candidate set at rank $t$ consists of the unselected positives $\{e_{\rho(t)}, \ldots, e_{\rho(k)}\}$ and all negatives $\mathcal{E}^{d}$, so the denominator decomposes into $B + \sum_{j=t}^{k} w(e_{\rho(j)})$. Note that the marginalization over the latent rankings of the negatives vanishes from step~\eqref{eq:kfac-app:b} to step~\eqref{eq:kfac-app:c} because the PL likelihood over the suffix ranking of $\mathcal{E}^{d}$ sums to one by Lemma~\ref{lem:PL-normalization-app}.

\vspace{2mm}

\subsection{Proof of the Inclusion-Exclusion Form for the Marginalized PL Likelihood.}\label{app:pl-ie}

\begin{lemma}[Exponential-race representation]
\label{lem:race-app}
Let $\{T_e\}_{e \in \mathcal{E}}$ be independent random variables with $T_e \sim \Exp(w(e))$. Ranking items by increasing $T_e$ induces the PL distribution in Eq.~\eqref{eq:PL-full}.
\end{lemma}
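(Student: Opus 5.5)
The plan is to prove Lemma~\ref{lem:race-app} by computing the probability of any fixed full ordering of the exponential clocks. We show it factorizes exactly as in Eq.~\eqref{eq:PL-full:a}, using the memoryless property of the exponential distribution together with the ``minimum of independent exponentials'' fact.

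First, the base fact. For independent $T_e \sim \Exp(w(e))$, $e \in \mathcal{S}$, we have
\[
P\Bigl(\min_{e\in\mathcal{S}} T_e > s\Bigr) = \prod_{e\in\mathcal{S}} e^{-w(e)s} = e^{-s\sum_{e\in\mathcal{S}} w(e)},
\]
so the minimum is $\Exp\bigl(\sum_{e\in\mathcal{S}} w(e)\bigr)$. Moreover,
\[
P\bigl(T_{e'} < \min_{e\neq e'} T_e\bigr) = \int_0^\infty w(e') e^{-w(e')s} e^{-s\sum_{e\neq e'} w(e)}\,ds = \frac{w(e')}{\sum_{e\in\mathcal{S}} w(e)}.
\]
The same integral shows that the event $\{\arg\min = e'\}$ is independent of the value of the minimum. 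Ties occur with probability zero, so the induced ranking is almost surely a well-defined permutation.

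Second, induction on $K=|\mathcal{E}|$. The case $K=1$ is trivial. For general $K$, fix a permutation $\tau$ and condition on the event that $e_{\tau(1)}$ wins, at time $M=T_{e_{\tau(1)}}$. By memorylessness, conditional on $\{T_e > M \text{ for all } e\neq e_{\tau(1)}\}$ and on $M$, the residuals $T_e - M$ for $e\in\mathcal{E}_2$ are again independent $\Exp(w(e))$. Hence the conditional law of the ordering of the remaining items is the race on $\mathcal{E}_2$. By the induction hypothesis this equals $p_{\mathrm{PL}}(\tau_{2:K}\mid x,\mathcal{E}_2;w)$. Multiplying by the first-step probability $w(e_{\tau(1)})/\sum_{e\in\mathcal{E}}w(e)$ reproduces the one-step recursion Eq.~\eqref{eq:PL-full:c}, which completes the induction.

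The only delicate step is the conditioning in the second part: we must make rigorous that, given the winner and the winning time, the residual clocks are i.i.d.\ exponentials with the original rates. I would handle this by directly computing the joint density
\[
P\bigl(T_{e_{\tau(1)}}\in ds,\ T_e - s > u_e\ \forall e\in\mathcal{E}_2\bigr) = w(e_{\tau(1)})e^{-w(e_{\tau(1)})s}\prod_{e\in\mathcal{E}_2} e^{-w(e)(s+u_e)}\,ds,
\]
which factorizes into a function of $s$ times $\prod_{e\in\mathcal{E}_2} e^{-w(e)u_e}$. This factorization gives the required conditional independence at once.

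As an alternative that avoids conditioning, one could integrate the joint density over the ordered region $t_1<\dots<t_K$ and peel off the integrals from the innermost (largest) variable outward. Each integration produces one factor $1/\sum_{l\ge j}w(e_{\tau(l)})$, and together these yield Eq.~\eqref{eq:PL-full:a} directly.
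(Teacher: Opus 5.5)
Your proposal is correct and follows the same route as the paper, which only sketches it: the minimum clock wins with probability proportional to its rate, and memorylessness restarts the race on the remaining items, which gives the one-step recursion in Eq.~\eqref{eq:PL-full:c}. Your joint-density factorization rigorously fills in the conditioning step that the paper leaves implicit, and your direct integration over $t_1<\dots<t_K$ is a valid alternative check.
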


\begin{proof}
This is a standard characterization of the Plackett-Luce model. The minimum exponential clock is selected first with probability proportional to its rate, and the memoryless property gives the same rule recursively for the remaining items.
\end{proof}

\noindent Using Lemma~\ref{lem:race-app}, we have
\begin{subequations}\label{eq:ie-app}
\begin{align}
p_{\mathrm{PL}}(\Omega_x \mid x, \mathcal{E}; w)
&=
\Pr\!\left(
\max_{e \in \mathcal{E}^{p}} T_e
<
\min_{e \in \mathcal{E}^{d}} T_e
\right)
\label{eq:ie-app:a}\\
&=
\Pr\!\left(
T_e < U \;\; \forall e \in \mathcal{E}^{p}
\right),
\ \
U := \min_{e \in \mathcal{E}^{d}} T_e
\nonumber\\
&=
\int_{0}^{\infty}
\Pr\!\left(
T_e < u \;\; \forall e \in \mathcal{E}^{p}
\right)
\, f_U(u)\,du
\label{eq:ie-app:c}\\
&=
\int_{0}^{\infty}
B e^{-Bu}
\prod_{e \in \mathcal{E}^{p}}
\Pr(T_e < u)\,du
\label{eq:ie-app:d}\\
&=
\int_{0}^{\infty}
B e^{-Bu}
\prod_{e \in \mathcal{E}^{p}}
\bigl(1 - e^{-w(e) u}\bigr)\,du
\label{eq:ie-app:e}\\
&=
\int_{0}^{\infty}
B e^{-Bu}
\sum_{S \subseteq \mathcal{E}^{p}}
(-1)^{|S|}
\prod_{e \in S} e^{-w(e) u}\,du
\label{eq:ie-app:f}\\
&=
\sum_{S \subseteq \mathcal{E}^{p}}
(-1)^{|S|}
\int_{0}^{\infty}
B \exp\!\left(-u\Bigl(B + \sum_{e \in S} w(e)\Bigr)\right)\,du
\nonumber\\
&=
\sum_{S \subseteq \mathcal{E}^{p}}
(-1)^{|S|}
\frac{B}{B + \sum_{e \in S} w(e)}.
\label{eq:multdpo_pl_ie}
\end{align}
\end{subequations}
This proves the inclusion-exclusion form for the marginalized PL likelihood, which reduces the number of terms in naive marginalization (i.e., Eq. (\ref{eq:multdpo_pl_likelihood})) from $k!$ to $2^{k}$. However, it still remains intractable for moderate $k$, motivating the multinomial surrogate developed in Mult-DPO.

\vspace{2mm}

\subsection{Derivation of the Multinomial Surrogate Likelihood in Eq.~\eqref{eq:multdpo_mn_likelihood}.}

We formally derive the multinomial surrogate likelihood. Recall from Section~\ref{sec:mult} that the categorical distribution over $\mathcal{E}$ is defined as $p(e \mid x) = w(e \mid x) / W$. The MN surrogate assigns to $\Omega_x$ the probability of the count event induced by $k$ i.i.d.\ draws from $p(\cdot \mid x)$ in which each positive item in $\mathcal{E}^{p}$ is sampled exactly once and no negative item in $\mathcal{E}^{d}$ is sampled. Letting $c(e)$ denote the count of item $e$ across the $k$ draws, this event can be formulated as
\[
c(e) = 1 \ \forall e \in \mathcal{E}^{p}, \ \ c(e) = 0 \ \forall e \in \mathcal{E}^{d}.
\]
Substituting these counts into the multinomial probability mass function then yields
\begin{subequations}\label{eq:MN-app-deriv}
\begin{align}
p_{\mathrm{MN}}(\Omega_x \mid x, \mathcal{E}; w)
&=
\frac{k!}{\prod_{e \in \mathcal{E}} c(e)!}
\prod_{e \in \mathcal{E}} p(e \mid x)^{c(e)}
\label{eq:MN-app:a}\\
&=
k!
\prod_{e \in \mathcal{E}^{p}} p(e \mid x)
\label{eq:MN-app:b}\\
&=
k!
\prod_{e \in \mathcal{E}^{p}}
\frac{w(e \mid x)}{W}.
\label{eq:MN-app:c}
\end{align}
\end{subequations}

\vspace{2mm}

\subsection{Proof of Theorem~\ref{thm:multdpo_lb}.}
\label{sec:proof1}

We now prove the main theorem of the paper, i.e., the multinomial surrogate likelihood is a pointwise lower bound on the intractable marginalized PL likelihood. Recall from Eq.~\eqref{eq:multdpo_pl_likelihood} that
\[
p_{\mathrm{PL}}(\Omega_x \mid x, \mathcal{E}; w)
=
\sum_{\rho \in S_k}
\prod_{t=1}^{k}
\frac{w(e_{\rho(t)} \mid x)}{B + \sum_{j=t}^{k} w(e_{\rho(j)} \mid x)}.
\]
Since the per-rank denominator satisfies $B + \sum_{j=t}^{k} w(e_{\rho(j)} \mid x) \le B + \sum_{e \in \mathcal{E}^{p}} w(e \mid x) = W$, we obtain the following chain of pointwise inequalities for any $\rho \in S_k$ and any $t \in \{1, \ldots, k\}$:
\begin{align*}
\frac{1}{B + \sum_{j=t}^{k} w(e_{\rho(j)} \mid x)}
&\ge
\frac{1}{W} \\
\Longrightarrow \quad
\frac{w(e_{\rho(t)} \mid x)}{B + \sum_{j=t}^{k} w(e_{\rho(j)} \mid x)}
&\ge
\frac{w(e_{\rho(t)} \mid x)}{W} \\
\Longrightarrow \quad
\prod_{t=1}^{k}
\frac{w(e_{\rho(t)} \mid x)}{B + \sum_{j=t}^{k} w(e_{\rho(j)} \mid x)}
&\ge
\prod_{t=1}^{k}
\frac{w(e_{\rho(t)} \mid x)}{W}.
\end{align*}
Summing over $\rho \in S_k$ gives
\[
p_{\mathrm{PL}}(\Omega_x \mid x, \mathcal{E}; w)
\ge
\sum_{\rho \in S_k}
\prod_{t=1}^{k}
\frac{w(e_{\rho(t)} \mid x)}{W}
=
k! \prod_{e \in \mathcal{E}^{p}} \frac{w(e \mid x)}{W}
=
p_{\mathrm{MN}}(\Omega_x \mid x, \mathcal{E}; w).
\]

\vspace{2mm}

\subsection{Proof of Theorem~\ref{thm:multdpo_tightness}.}
\label{sec:proof2}

We then characterize the tightness of the lower bound, i.e., the ratio between the marginalized PL likelihood and the multinomial surrogate likelihood is sandwiched between $1$ and $(1 + A/B)^{k-1}$. Specifically, we first fix an arbitrary $\rho \in S_k$ and define the prefix sums of positive weights
\[
H_{t-1}(\rho)
:=
\sum_{j=1}^{t-1} w(e_{\rho(j)} \mid x),
\ \ t = 2, \ldots, k.
\]
For the PL model, the denominator at rank $t$ is
\[
D_t(\rho)
=
B + \sum_{j=t}^{k} w(e_{\rho(j)} \mid x)
=
W - H_{t-1}(\rho).
\]
Therefore, the ratio between the per-$\rho$ PL term and the corresponding MN term satisfies
\begin{subequations}\label{eq:tight-app}
\begin{align}
\frac{
\prod_{t=1}^{k}
\frac{w(e_{\rho(t)} \mid x)}{D_t(\rho)}
}{
\prod_{t=1}^{k}
\frac{w(e_{\rho(t)} \mid x)}{W}
}
&=
\prod_{t=2}^{k}
\frac{W}{W - H_{t-1}(\rho)}
\label{eq:tight-app:a}\\
&=
\prod_{t=2}^{k}
\frac{1}{1 - \frac{H_{t-1}(\rho)}{W}}
\label{eq:tight-app:b}\\
&\le
\prod_{t=2}^{k}
\frac{1}{1 - \frac{A}{W}}
=
\left(\frac{W}{W - A}\right)^{k-1}
=
\left(1 + \frac{A}{B}\right)^{k-1}.
\label{eq:tight-app:c}
\end{align}
\end{subequations}
Since this bound holds uniformly over all $\rho \in S_k$, summing over $\rho$ yields
\[
1
\le
\frac{p_{\mathrm{PL}}(\Omega_x \mid x, \mathcal{E}; w)}
     {p_{\mathrm{MN}}(\Omega_x \mid x, \mathcal{E}; w)}
\le
\left(1 + \frac{A}{B}\right)^{k-1}.
\]

\vspace{2mm}

\subsection{Derivation for the DPO-style Extension.}
In this part, we connect weights in both the MN surrogate and the marginalized PL event model with the LLM-based RS policy $\pi_{\theta}$ through the analytical solution of the RLHF objective, i.e., $w_{\pi_\theta}(e \mid x)$, so that both bounds hold for the corresponding DPO-style losses. Let $\pi_{\mathrm{ref}}(e \mid x)$ be a reference policy and $\pi_{\theta}(e \mid x)$ a learned policy. The RLHF solution gives weights proportional to $\left(\pi_\theta(e \mid x)/\pi_{\mathrm{ref}}(e \mid x)\right)^{\beta}$, with a context-dependent factor $Z(x)^\beta$ shared by all candidates. Since both PL and MN likelihoods are invariant to multiplying all weights in the same candidate set by a common positive constant, we fix the representative
\[
r_{\pi_\theta}(e \mid x) := \beta \log \frac{\pi_\theta(e \mid x)}{\pi_{\mathrm{ref}}(e \mid x)}, \ \
w_{\pi_\theta}(e \mid x) := \exp(r_{\pi_\theta}(e \mid x))
=
\left(\frac{\pi_\theta(e \mid x)}{\pi_{\mathrm{ref}}(e \mid x)}\right)^\beta.
\]
If we define $p_{\mathrm{PL}}(\Omega_x \mid x, \mathcal{E}; w_{\pi_\theta})$ and $p_{\mathrm{MN}}(\Omega_x \mid x, \mathcal{E}; w_{\pi_\theta})$ by replacing $w(e \mid x)$ with $w_{\pi_\theta}(e \mid x)$ in Eqs.~\eqref{eq:multdpo_pl_likelihood} and~\eqref{eq:multdpo_mn_likelihood}, the lower bound and the tightness bound hold pointwise with $A, B$ replaced by
\[
A_{\pi_\theta} := \sum_{e \in \mathcal{E}^{p}} w_{\pi_\theta}(e \mid x), \ \
B_{\pi_\theta} := \sum_{e \in \mathcal{E}^{d}} w_{\pi_\theta}(e \mid x).
\]
Equivalently, taking negative logarithms reverses the likelihood inequality: the Mult-DPO loss $\mathcal{L}^{\mathrm{Mult\text{-}DPO}}(\pi_\theta) := -\log p_{\mathrm{MN}}(\Omega_x \mid x, \mathcal{E}; w_{\pi_\theta})$ upper bounds the marginalized PL-DPO loss $\mathcal{L}^{\mathrm{PL\text{-}DPO}}(\pi_\theta) := -\log p_{\mathrm{PL}}(\Omega_x \mid x, \mathcal{E}; w_{\pi_\theta})$.

\vspace{2mm}

\subsection{Proof of the Multi-Level PL Factorization in Eq.~\eqref{eq:multdpo_group_pl_factorization}.}
We further extend the theoretical analysis to the multi-level preference alignment setting discussed in Section~\ref{sec:smn}, where user preference takes the form of the multi-level preference event $\Omega^{\mathrm{grp}}_x$ defined in Eq.~\eqref{eq:multdpo_group_event}. We show that $\Omega^{\mathrm{grp}}_x$ factorizes recursively across the boundaries between consecutive groups, reducing the multi-level case to a product of set-wise preference instances each handled by Theorem~\ref{thm:multdpo_lb} and Theorem~\ref{thm:multdpo_tightness}. By definition, the full multi-level preference event satisfies
\[
\Omega^{\mathrm{grp}}_x
=
\bigcap_{g=1}^{G-1} \Omega^{(g)}_x.
\]
By the sequential selection property of PL, once items from the higher-preference groups $\mathcal{E}^{(1)}, \ldots, \mathcal{E}^{(g-1)}$ have been placed, the relative ordering among the remaining groups depends only on the remaining candidate set $\bigcup_{h=g}^{G} \mathcal{E}^{(h)}$, so the events $\Omega^{(g)}_x$ become conditionally independent across $g$. Therefore, the likelihood factorizes recursively as
\begin{align}
p_{\mathrm{PL}}(\Omega^{\mathrm{grp}}_x \mid x, \mathcal{E}; w)
&=
p_{\mathrm{PL}}\!\left(\Omega^{(1)}_x \mid x, \bigcup_{h=1}^{G} \mathcal{E}^{(h)}; w\right)
\cdot
p_{\mathrm{PL}}\!\left(\Omega^{\mathrm{grp}}_x\ \text{restricted to}\ \bigcup_{h=2}^{G} \mathcal{E}^{(h)}\right)
\nonumber\\
&=
p_{\mathrm{PL}}\!\left(\Omega^{(1)}_x \mid x, \bigcup_{h=1}^{G} \mathcal{E}^{(h)}; w\right)
\cdot
p_{\mathrm{PL}}\!\left(\Omega^{(2)}_x \mid x, \bigcup_{h=2}^{G} \mathcal{E}^{(h)}; w\right)
\cdot \ldots
\nonumber\\
&=
\prod_{g=1}^{G-1}
p_{\mathrm{PL}}\!\left(\Omega^{(g)}_x \mid x, \bigcup_{h=g}^{G} \mathcal{E}^{(h)}; w\right),
\label{eq:mpl_seq}
\end{align}
which proves Eq.~\eqref{eq:multdpo_group_pl_factorization}. For each fixed group $g$, the group-$g$-specific preference event $\Omega^{(g)}_x$ is exactly a set-wise preference event of the same form as Eq.~\eqref{eq:set_rank}, with $\mathcal{E}^{(g)}$ playing the role of $\mathcal{E}^{p}$ and $\bigcup_{h=g+1}^{G} \mathcal{E}^{(h)}$ playing the role of $\mathcal{E}^{d}$. Let
\[
A_g := \sum_{e \in \mathcal{E}^{(g)}} w(e \mid x), \quad
B_g := \sum_{h=g+1}^{G}\sum_{e \in \mathcal{E}^{(h)}} w(e \mid x), \quad
W_g := A_g+B_g
\]
denote the current-group, lower-group, and total remaining weights, respectively. Therefore, reusing the binary results yields
\begin{equation}
p_{\mathrm{PL}}\!\left(\Omega^{(g)}_x \mid x, \bigcup_{h=g}^{G} \mathcal{E}^{(h)}; w\right)
=
\sum_{\rho \in S_{k_g}}
\prod_{t=1}^{k_g}
\frac{w(e^{(g)}_{\rho(t)} \mid x)}{B_g + \sum_{j=t}^{k_g} w(e^{(g)}_{\rho(j)} \mid x)},
\label{eq:grp-tier-kfac-app}
\end{equation}
and equivalently
\begin{equation}
p_{\mathrm{PL}}\!\left(\Omega^{(g)}_x \mid x, \bigcup_{h=g}^{G} \mathcal{E}^{(h)}; w\right)
=
\sum_{S \subseteq \mathcal{E}^{(g)}}
(-1)^{|S|}
\frac{B_g}{B_g + \sum_{e \in S} w(e \mid x)}.
\label{eq:grp-tier-ie-app}
\end{equation}

\vspace{2mm}

\subsection{Proof of the Sequential Multinomial Lower Bound and Tightness.}

We finally prove that the DPO loss derived from the sequential multinomial (SMN) surrogate, i.e., the Mult$^{2}$-DPO loss in Eq.~\eqref{eq:multdpo_group_final}, is an upper bound on the DPO loss derived from the marginalized PL reward model in Eq.~\eqref{eq:multdpo_group_pl_factorization} in the multi-level setting. From Eq.~\eqref{eq:multdpo_group_pl_factorization} we have
\begin{equation}
p_{\mathrm{PL}}(\Omega^{\mathrm{grp}}_x \mid x, \mathcal{E}; w)
=
\prod_{g=1}^{G-1}
p_{\mathrm{PL}}\!\left(\Omega^{(g)}_x \mid x, \bigcup_{h=g}^{G} \mathcal{E}^{(h)}; w\right).
\label{eq:smn-proof-app:a}
\end{equation}
For each group $g$, applying Theorem~\ref{thm:multdpo_lb} to the group-$g$-specific preference event $\Omega^{(g)}_x$ on the remaining candidate set $\bigcup_{h=g}^{G} \mathcal{E}^{(h)}$ yields
\begin{equation}
p_{\mathrm{PL}}\!\left(\Omega^{(g)}_x \mid x, \bigcup_{h=g}^{G} \mathcal{E}^{(h)}; w\right)
\ge
k_g! \prod_{e \in \mathcal{E}^{(g)}} \frac{w(e \mid x)}{W_g}.
\label{eq:smn-proof-app:b}
\end{equation}
Here $W_g$ is the total remaining weight from group $g$ onward.
Multiplying Eq.~\eqref{eq:smn-proof-app:b} over $g = 1, \ldots, G-1$ and using Eq.~\eqref{eq:smn-proof-app:a} gives
\[
p_{\mathrm{PL}}(\Omega^{\mathrm{grp}}_x \mid x, \mathcal{E}; w)
\ge
\prod_{g=1}^{G-1}
\left(
k_g! \prod_{e \in \mathcal{E}^{(g)}} \frac{w(e \mid x)}{W_g}
\right)
=
p_{\mathrm{SMN}}(\Omega^{\mathrm{grp}}_x \mid x, \mathcal{E}; w),
\]
which establishes the SMN lower bound on the multi-level PL likelihood.

The tightness bound follows by applying Theorem~\ref{thm:multdpo_tightness} to each factor in Eq.~\eqref{eq:smn-proof-app:a}. For every $g=1,\ldots,G-1$,
\[
1
\le
\frac{
p_{\mathrm{PL}}\!\left(\Omega^{(g)}_x \mid x, \bigcup_{h=g}^{G} \mathcal{E}^{(h)}; w\right)
}{
k_g! \prod_{e \in \mathcal{E}^{(g)}} \frac{w(e \mid x)}{W_g}
}
\le
\left(1+\frac{A_g}{B_g}\right)^{k_g-1}.
\]
Multiplying these inequalities across groups gives
\[
1
\le
\frac{
p_{\mathrm{PL}}(\Omega^{\mathrm{grp}}_x \mid x, \mathcal{E}; w)
}{
p_{\mathrm{SMN}}(\Omega^{\mathrm{grp}}_x \mid x, \mathcal{E}; w)
}
\le
\prod_{g=1}^{G-1}
\left(1+\frac{A_g}{B_g}\right)^{k_g-1}.
\]
Substituting the policy-induced weight $w_{\pi_\theta}(e \mid x)$ and taking negative logarithms flips the likelihood inequality, yielding the Mult$^{2}$-DPO loss in Eq.~\eqref{eq:multdpo_group_final} as a tractable upper bound on the otherwise intractable multi-level marginalized PL-DPO loss and proving Corollary~\ref{cor:multdpo_group_dpo_bound}. Here, we note that when $G = 2$, the bound reduces exactly to the binary Mult-DPO upper bound established in Section~\ref{sec:mult-objective}.

\section{Related Work}
\label{sec:related}

\subsection{LLM-based Recommender Systems}
Recently, LLM-based recommender systems (RSs) have attracted growing attention, where LLMs' knowledge and reasoning ability can be leveraged to generate high-quality recommendations \citep{lin2023howcan,wu2024llmrec}. Early works show that zero-shot LLMs can already serve as effective RSs by ranking candidate items provided in the prompt \citep{hou2024zeroshot, he2023zeroshot}. Post-training approaches further adapt pretrained LLMs to recommendation tasks, encompassing both supervised fine-tuning (SFT) and preference-based alignment. SFT-based methods fine-tune the LLM on prompt-completion pairs of context and item-recommendation demonstrations, with exemplar methods such as TallRec \citep{bao2023tallrec}, BIGRec \citep{bao2025bigrec}, P5 \citep{geng2022p5}, InstructRec \citep{zhang2025instructrec}, and LLaRA \citep{liao2024llara}. To further align LLM-based recommenders with user preferences beyond SFT, more recent works employ preference-based alignment. RL-free DPO-based methods include S-DPO \citep{chen2024softmax}, DMPO \citep{bai2024dmpo}, and SPRec \citep{gao2025sprec} (see Section~\ref{sec:related_work_dpo} for details). A parallel line built on reinforcement learning (RL), particularly group relative policy optimization (GRPO) \citep{shao2024deepseekmath}, generates a list of recommendations via RL: Rec-R1 \citep{lin2025recr1} and Rank-GRPO \citep{zhu2025rankgrpo} sample diverse recommendation trajectories from the LLM and update the policy with corresponding reward signals. However, the quadratic cost of autoregressive item generation limits these methods to generating only a small number of items per query. In contrast, our Mult-DPO natively supports multi-positive supervision and scales to full ranking over the entire catalog.

\subsection{Direct Preference Optimization and Its Role in RS}
\label{sec:related_work_dpo}

Aligning LLMs with user preferences is a critical challenge. Reinforcement learning from human feedback (RLHF) \citep{christiano2017deep, ouyang2022training} first trains a reward model from pairwise preference data and then optimizes the LLM policy against this reward via RL. Direct preference optimization (DPO) \citep{rafailov2023direct} and its variants (including SimPO \citep{meng2024simpo}, IPO \citep{azar2024general}, and KTO \citep{ethayarajh2024kto}) exploit a closed-form connection between the optimal policy and the reward function in RLHF and reduce alignment to maximum likelihood estimation. Adapting DPO to RSs, however, introduces additional challenges due to the multi-item nature of user feedback. Prior work has explored two main strategies. The first generalizes the Bradley-Terry (BT) model underlying vanilla DPO to its listwise counterpart, the Plackett-Luce (PL) model, with methods such as PRO \citep{song2024preference} and KPO \citep{zhang2025kpo} fitting PL likelihoods to fully ordered candidate lists. The second restricts supervision to a single positive paired with multiple negatives: S-DPO \citep{chen2024softmax} adopts a softmax-style loss corresponding to the single-positive special case of the marginalized PL likelihood, DMPO \citep{bai2024dmpo} contrasts the positive against the arithmetic mean of negative log-ratios via an ad-hoc BT sigmoid, and SPRec \citep{gao2025sprec} introduces self-play with model-generated negatives. Nevertheless, neither strategy faithfully captures the joint multi-positive/multi-negative structure of user preference in RSs.

\section{Additional Experiments and Details}
\label{sec:appendix_exp}

\subsection{Implementation Details}
\label{sec:imp}

We run all experiments on the $0.5\text{B}$-$1.5\text{B}$ backbones with $2$ NVIDIA H100 GPUs, and on the $3\text{B}$-$7\text{B}$ backbones with $2$ NVIDIA B200 GPUs. Since the maximum sequence length and per-instance candidate count vary across the three datasets, we adopt different parallelism strategies, ranging from pure data parallelism to ZeRO-2~\citep{rajbhandari2020zero}, while keeping the parallelism strategy identical across baselines for fair comparison. All methods are optimized with AdamW~\citep{loshchilov2019adamw} at a learning rate of $1\mathrm{e}{-6}$. Across both training and inference, we always reuse the KV cache of the prompt across the candidate items associated with the same context, which we find substantially reduces the total compute as detailed in the complexity analysis in Section~\ref{sec:complexity}. 
\subsection{Baselines}
\label{sec:baselines}

The baselines that we include in the main paper are summarized as follows:

\begin{itemize}[leftmargin=*]

\item \textbf{BIGRec}~\citep{bao2025bigrec} is an instruction-tuning framework for LLM-based recommendation that prompts the LLM to generate candidate items from a user's interaction history and then grounds the generated outputs to exact items in the catalog, providing a strong SFT-only baseline.

\item \textbf{$\text{D}^{3}$}~\citep{bao2024decoding} improves alignment at decoding time through a debiasing-diversifying decoding strategy that mitigates amplification bias and homogeneity in LLM-based recommendation outputs, without modifying the underlying training procedure.

\item \textbf{Vanilla DPO}~\citep{rafailov2023direct} is the original DPO objective formulated under the Bradley-Terry (BT) model, where each training instance contains one positive and one negative item.

\item \textbf{DMPO}~\citep{bai2024dmpo} adapts DPO to recommendation by contrasting one positive item against the arithmetic mean of the negative log-ratios within the Bradley-Terry preference model, enabling preference-based alignment with multiple sampled negatives per context.

\item \textbf{S-DPO}~\citep{chen2024softmax} extends vanilla DPO from a single negative to multiple negatives by optimizing a softmax-based preference objective over one preferred item and several dispreferred items, which corresponds to the single-positive special case of Mult-DPO under Theorem~\ref{thm:multdpo_tightness}.

\item \textbf{LiPO (BT)}~\citep{liu2024lipo} casts listwise preference optimization as learning-to-rank over a candidate list. LiPO is a family of such objectives; we adopt its Bradley-Terry variant, LiPO (BT), which sums the pairwise BT losses over all positive-negative pairs at each step.

\end{itemize}

\subsection{Evaluation of Mult-DPO with More Backbone Models}
\label{sec:exp_more_backbones}

\begin{wrapfigure}{r}{0.48\textwidth}
\vspace{-5mm}
\centering
\includegraphics[width=0.48\textwidth]{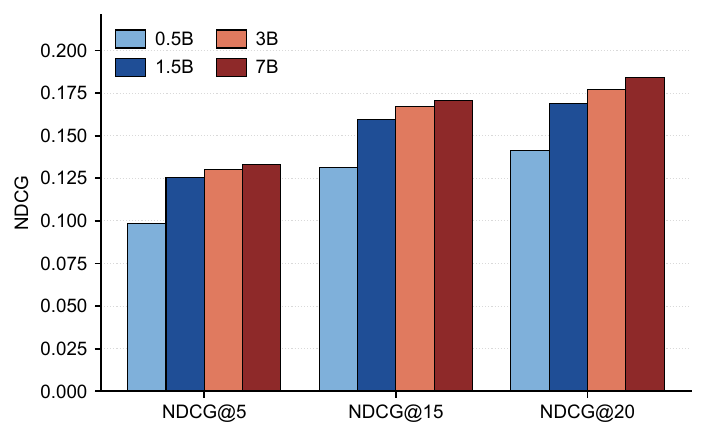}
\caption{Mult-DPO test NDCG on \texttt{Goodreads} across four Qwen2.5 backbones at $\beta{=}0.005$ evaluated on the full item catalog.}
\label{fig:scaling_goodreads}
\vspace{-4mm}
\end{wrapfigure}

To complement the experimental results with 0.5B and 3B backbone models in the main text, we additionally train Mult-DPO on \texttt{Goodreads} with the Qwen2.5-1.5B and Qwen2.5-7B backbones, and report test NDCG@$\{5,15,20\}$ in Fig.~\ref{fig:scaling_goodreads}. Rather than re-running the expensive $\beta$ sweep at every backbone scale, we reuse the optimal $\beta{=}0.005$ identified from the 0.5B and 3B validation sweeps in Section~\ref{sec:exp_hyper} and apply it to the 1.5B and 7B backbone models. Empirically, we find that NDCG climbs steeply from 0.5B to 1.5B backbone model and gradually flattens between 1.5B, 3B, and 7B, with each successive doubling in scale yielding a smaller absolute gain. This pattern indicates that Mult-DPO already extracts most of the available set-wise signal at moderate scale, while the continued improvement at 7B confirms that the benefit of set-wise alignment can be potentially generalized to large backbone regimes.

\section{Limitations}
\label{sec:lim}

We note that the multinomial (MN) construction developed in Mult-DPO is a surrogate event likelihood rather than a normalized ranking distribution. It provides one tractable surrogate for the intractable marginalized Plackett-Luce (PL) likelihood that simultaneously admits a closed-form DPO-style optimization, but it may not be the only or the tightest such surrogate. Because the resulting objective uses a uniform target over positives at the same preference level, it is best matched to settings where such positives can be treated as exchangeable; when positives have substantially different relevance, rating-aware groups or weighted variants may be preferable. Alternative methods could be derived from different angles, e.g., expectation-maximization that introduces latent assignment variables over the unobserved candidate orderings, or a variational lower bound with a flexible variational distribution over the positive permutations. Each alternative offers a different trade-off between tractability and tightness, and a systematic exploration of these directions for new DPO-style alignment with multi-candidate preferences is an interesting direction for future study.

\section{Broad Impact}

Although this paper focuses on recommendation, Mult-DPO is applicable to broader tasks in which set-wise preference data arise, i.e., a set of preferred and dispreferred responses with no reliable internal order, including information retrieval and search ranking with multiple relevant documents, open-domain question answering with multiple acceptable answers, and code generation with several correct programs. As with any preference-based alignment method, the trained policy inherits the preference distribution of its training data, so we recommend pairing Mult-DPO with responsible-deployment practices, such as bias audits and human review, before consumer-facing applications.

\end{document}